\newcommand{\real}{\mathcal{R}\text{e}}
\newcolumntype{R}{>{\centering\arraybackslash}p{3em}}
\newcolumntype{o}{>{\centering\arraybackslash}p{4em}}
\newcolumntype{d}{>{\centering\arraybackslash}p{5em}}
\newcolumntype{u}{>{\centering\arraybackslash}p{5.7em}}
\newcolumntype{S}{>{\centering\arraybackslash}p{7em}}
\newcolumntype{F}{>{\centering\arraybackslash}p{15em}}
\title{Node Placement and Distributed Magnetic Beamforming Optimization for Wireless Power Transfer 
\thanks{M. R. Vedady Moghadam is with the Department of Electrical and Computer Engineering, National University of Singapore (e-mail: elemrvm@nus.edu.sg).}
\thanks{R. Zhang is with the Department of Electrical and Computer Engineering, National University of Singapore (e-mail: elezhang@nus.edu.sg). He is also with the Institute for Infocomm Research, A*STAR, Singapore.}
}
\author{Mohammad R. Vedady Moghadam, \textit{Member}, \textit{IEEE}, and Rui Zhang, \textit{Fellow}, \textit{IEEE} \vspace{-1.5mm}} 
\begin{document} 
\maketitle\thispagestyle{empty} 
\begin{abstract} 
In multiple-input single-output (MISO) wireless power transfer (WPT) via magnetic resonant coupling, multiple  transmitters are deployed to enhance the efficiency of power transfer to a single receiver by jointly adapting their source  currents/voltages so as to constructively combine the induced  magnetic fields at the receiver, a technique  known as \textit{magnetic beamforming}.  
In practice, since the transmitters (power chargers) are usually at fixed locations and the receiver (e.g. mobile phone)  is desired to be freely located in a target region for wireless charging, its received power can fluctuate significantly over locations even with adaptive magnetic beamforming applied.
To achieve uniform power coverage, the transmitters need to be  optimally placed in the region  such that a minimum charging power  can be achieved for the receiver regardless of its location, which  motivates this paper.   
First, we derive the optimal magnetic beamforming solution in closed-form for a distributed MISO WPT system with fixed  locations of the transmitters and receiver  to maximize the deliverable power to the receiver subject to a given sum-power constraint at all transmitters. 
By applying adaptive magnetic beamforming based on this optimal solution, we then jointly optimize the locations of all transmitters to maximize the minimum power deliverable to the receiver over a given one-dimensional (1D) region.  
Although the formulated problem is non-convex, we propose an iterative  algorithm for solving it efficiently. 
Extensive simulation results are provided which show the significant  performance gains by the proposed design with optimized transmitter locations  and magnetic beamforming as compared to other benchmark schemes with non-adaptive and/or heuristic   transmitter current allocation  and node placement.      
Last, we extend the node placement problem to the case of two-dimensional (2D)  region, and propose efficient designs for this case.  
\end{abstract}
%*************************
%*************************
%*************************
%*************************
\begin{IEEEkeywords} 
Near-field wireless power transfer, distributed wireless charging, magnetic resonant coupling,  magnetic beamforming, node placement optimization,  uniform  power coverage. 
\end{IEEEkeywords}  
%*************************
%*************************
%*************************
%*************************
\newtheorem{definition}{\underline{Definition}}[section]
\newtheorem{fact}{Fact}
\newtheorem{assumption}{Assumption}
\newtheorem{theorem}{\underline{Theorem}}[section]
\newtheorem{lemma}{\underline{Lemma}}[section]
\newtheorem{corollary}{\underline{Corollary}}[section]
\newtheorem{proposition}{\underline{Proposition}}[section]
\newtheorem{example}{\underline{Example}}[section]
\newtheorem{remark}{\underline{Remark}}[section]
\newtheorem{algorithm}{\underline{Algorithm}}[section]
\newcommand{\mv}[1]{\mbox{\boldmath{$ #1 $}}}
%*************************
%*************************
%*************************
%*************************
\vspace{-1.5mm}
\section{Introduction}
\PARstart{N}EAR-FIELD wireless power transfer (WPT) has drawn significant interests recently  due to its high efficiency for delivering power to electric loads without the need for any wire. 
Inductive coupling (IC) \cite{Bolger,Kim,Chwei-Sen} is the conventional method to realize near-field WPT for short-range applications typically within  centimeters.
The wireless power consortium (WPC) that  developed the ``Qi'' standard \cite{WPC} is the main industrial organization for commercializing  wireless charging based on IC.
Recently,  magnetic resonant coupling (MRC)  \cite{Kurs, Elisenda, Shin,YZhang1,HLi} has been applied to significantly enhance the efficiency and range of WPT  compared to IC, thus opening up a broader avenue for practical applications such as biomedical device charging  \cite{Na,QXu}, electric vehicle charging  \cite{SLi,Ibrahim}, etc.  
In MRC-enabled WPT (termed MRC-WPT), compensators each being a capacitor of variable capacity are embedded in the electric circuits of  individual power transmitters and receivers to tune their oscillating  frequencies to be the same as the operating frequency adopted by all input voltage/current sources so as to achieve resonance.  
Alternatively, resonators each of which  constitutes  a simple RLC circuit resonating at the source  frequency can be employed in close proximity of the coils of any off-resonance transmitters/receivers to help efficiently transfer power between  them. 
With MRC, the total reactive power consumption in the system is effectively reduced  due to resonance and thus high power transfer efficiency is achieved over longer distance than the conventional IC. 
The preliminary experiments in \cite{Kurs} show that  MRC enables a single transmitter to transfer $60$ watts of power wirelessly with $40\%$--$50\%$ efficiency to a single receiver at a distance of about $2$ meters.  
More recent  experiment and simulation results on the controlling, scalability analysis, and  performance characterization  of MRC-WPT systems have been  reported in the literature (see e.g. \cite{Zhong,Yin,Elisenda2}).
Formed after the merging between the alliance for wireless power (AW4P) that developed the ``Rezence'' specification and the power matters alliance (PMA), the new AirFuel alliance is currently  the main industrial organization for developing wireless charging  based on MRC \cite{Rez}. 
The Rezence specification advocates a superior charging range, the capability of charging multiple devices simultaneously, and the use of  two-way communication via e.g. Bluetooth  between the charger unit and devices for real-time charging control. 
These features make Rezence and its future extensions a promising technology for high-performance near-field  wireless charging. 

In the current Rezence specification, one transmitter with a single coil is used  in the power transmitting unit, i.e., only  the  single-input multiple-output (SIMO) MRC-WPT is considered, as shown in Fig. \ref{fig:WPTTable}(a).
Although this centralized WPT system performs well when the receivers are all sufficiently close to the transmitter, the power delivered to each receiver decays significantly as it moves away from the transmitter. 
This thus motivates distributed WPT where the single centralized transmitter coil is divided into multiple coils (i.e., separate   transmitters) each with a smaller size (coil radius), which are then  placed in different locations to cover a given target region,  as shown in Fig. \ref{fig:WPTTable}(b). 
By coordinating the  transmissions of distributed coils  via jointly allocating their source currents/voltages, in \cite{JD} it is  shown that their induced magnetic fields at one or more receivers  can be constructively combined, thus achieving  a  magnetic beamforming gain in a manner analogous to multi-antenna beamforming in  far-field wireless communication or  WPT \cite{RZhang,Dusit,SBi}.   
Besides, distributed WPT shortens the maximum distance from each receiver to its nearest transmitter(s) in a given region  compared to centralized WPT,  thus achieves more uniform charging performance  over the region. 
\begin{figure} [t!]
	\begin{center}
		\subfigure[Centralized WPT]
		{\scalebox{0.45}{\includegraphics{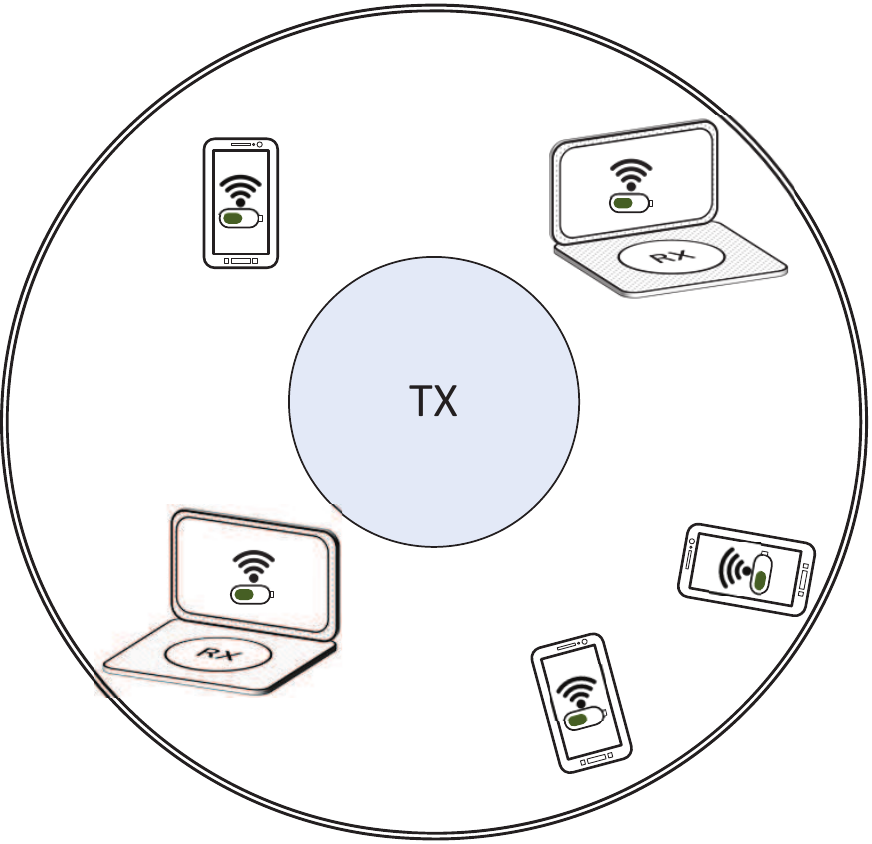}}}\hspace{4mm} 
		\subfigure[Distributed WPT]
		{\scalebox{0.45}{\includegraphics{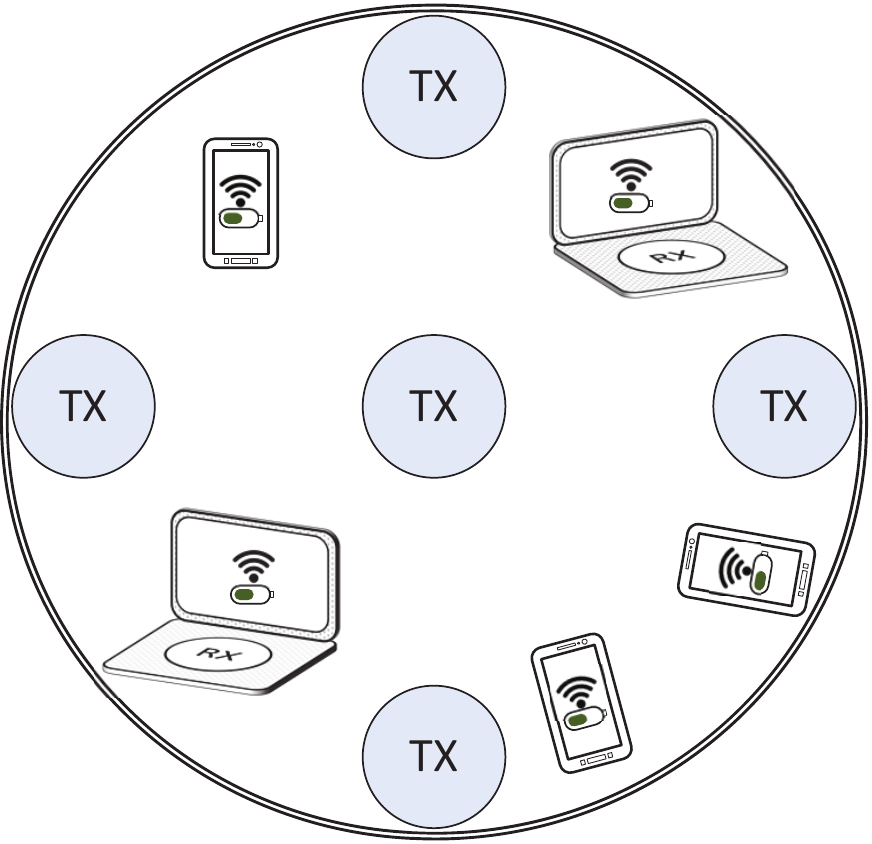}}}  \end{center} 
	\vspace{-2mm}
	\caption{Two different  system setups for wireless charging.}
	\label{fig:WPTTable}
\end{figure} 

\begin{figure} [t!]
	\begin{center} 
		\subfigure[Far-field WPT]
		{\scalebox{0.93}{\includegraphics{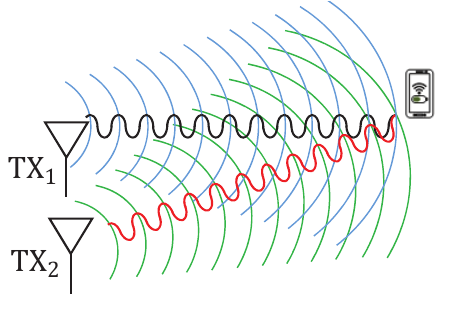}}}\hspace{-0.7mm} 
		\subfigure[Near-field WPT]
		{\scalebox{0.96}{\includegraphics{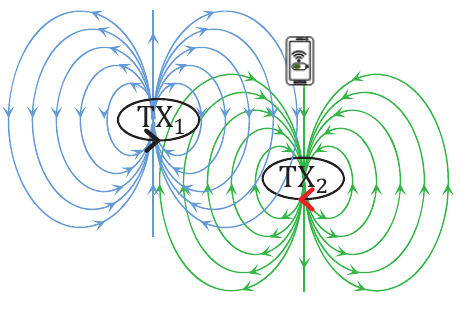}}} 
	\end{center} 
	\vspace{-2mm}
	\caption{Beamforming in far-field versus near-field MISO WPT systems.}
	\label{fig:Beamforming} \vspace{-2mm}
\end{figure} 
Generally speaking,  there are similarities as well as  differences between  magnetic beamforming in near-field WPT and its far-field counterpart. 
For instance, as shown in Fig. \ref{fig:Beamforming}(a), the objective of the beamforming design in far-field multiple-input single-output (MISO) WPT is to ensure that the electromagnetic (EM) waves propagated from different transmitter (TX) antennas are aligned at the single receiver antenna, i.e., their arriving phases are all identical so that they can be coherently added to maximize the received signal amplitude/power. 
This is practically realized by adjusting the phase and amplitude of the transmitted signal at each antenna.   
In contrast, as shown in Fig. \ref{fig:Beamforming}(b),  magnetic beamforming in near-field MISO WPT aims  to achieve that the polarities of the magnetic fields generated by different transmitters are all identical (i.e., upward or downward) at a given receiver location and hence these magnetic fields can be constructively added to maximize the magnetic flux intensity at the receiver coil.  
This can be realized by adjusting the direction and magnitude of the current flowing in each transmitter coil.    
For example, in Fig. \ref{fig:Beamforming}(b), the current at  coil $2$ should flow in the opposite direction of that in coil $1$ in order for their generated magnetic fields to be both downward at the receiver coil.   
However, different from  far-field WPT in which the receivers are passive and have no effect  on the power radiated by the transmitters, the cross-coupling effect among the transmitters and receivers in near-field WPT is generally strong. 
As a result, changing the input current/voltage (load resistance) at one transmitter (receiver) not only affects its own transmitted (received) power, but also influences the power of all other transmitters and receivers in general.  
Therefore, the modeling and design of magnetic beamforming in near-field WPT are generally different from its far-field counterpart based on EM wave propagation. 

The optimal magnetic beamforming  design in MISO MRC-WPT systems is investigated in \cite{Lang,Gang}. 
Specifically, \cite{Lang} formulates a convex optimization problem to jointly design the source currents at all transmitters to maximize the WPT efficiency subject to a given  minimum power deliverable to the load at a single receiver.  
On the other hand, \cite{Gang} jointly optimizes the transmitter currents to maximize the deliverable power to a single receiver by considering a sum-power constraint for all transmitters as well as practical peak  voltage and current  constraints at individual  transmitters.
Recently, the selective WPT technique is proposed for  MISO MRC-WPT systems where one transmitter is selected at each time (i.e., transmitter selection) to deliver wireless power to a single receiver, and  a simple control mechanism is  devised   for performance optimization \cite{BHChoi}. 
Selective WPT  is also  proposed for SIMO MRC-WPT systems  in \cite{YZhang,YJKim}.  
This technique delivers power to only one selected receiver (i.e., receiver selection) at each time to eliminate the magnetic cross-coupling  effect among different  receivers  and hence achieve more balanced power transfer to them, assuming that their natural frequencies are set  well  separated from each other. 
Alternatively, \cite{MRVM-Zhang-1}   proposes to jointly optimize the load resistance of all receivers in a SIMO MRC-WPT system  to exploit the magnetic cross-coupling  effect to alleviate the near-far issue by delivering balanced power to individual receivers regardless of their distances to the power transmitter.   
In general, selective WPT requires a simpler control mechanism than magnetic beamforming, while its performance is also limited since only one pair of transmitter and receiver is allowed for power transfer at each time. 
In contrast, magnetic beamforming enables multiple transmitters to send power to one or more receivers simultaneously by properly assigning  the source currents (load resistance values) at individual transmitters (receivers), thus in general achieving better performance than the simple  transmitter/receiver selection. 

The studies in  \cite{JD,Lang,Gang,YZhang,YJKim,MRVM-Zhang-1,BHChoi} show promising directions  to improve the efficiency as well as performance fairness in MRC-WPT systems, but all of them assume that the transmitters and receivers are at fixed locations in a target region.  
In practice, each wireless device (e.g., mobile phone) is desired to be freely located in any position in the region (e.g., on a charging table) when it is being charged, for the 
convenience of its user.   
In this case, if the locations of
the transmitters are not appropriately  designed in a MISO MRC-WPT system, the deliverable power to the receiver can fluctuate significantly over different  locations.  
Such power fluctuation degrades the quality of service, since the power requirement  of the receiver load may not be satisfied at all locations in the region, even when magnetic beamforming is applied to adapt to the location of the receiver.
To achieve uniform power coverage, one possible method is to allow the transmitters to  track the location of the receiver and move toward it in real time, which however may not be feasible as in practice transmitters like power chargers are usually at fixed locations.  Alternatively,  the transmitters can be optimally placed at their initial deployment  such that a minimum charging power is ensured to be achievable  for the receiver regardless of its location in the region.    
This thus motivates our work in this paper to optimize the transmitter locations in a MISO MRC-WPT system jointly with receiver location-adaptive  magnetic beamforming to maximize the minimum power deliverable to the receiver  over a target region.    

The main results of this paper are summarized as follows:  
\begin{itemize}
\item First, we formulate the magnetic beamforming problem for a MISO MRC-WPT system with distributed transmitters to maximize the deliverable power to the load at a single receiver subject to a given transmitters' sum-power constraint, by assuming that the power transmitters and receiver are all
at fixed locations.  
We derive the closed-form solution to the magnetic beamforming problem in terms of the mutual inductances between the transmitters and the receiver. 
Our solution shows that the optimal  current allocated to each transmitter is  proportional to the mutual inductance between its coil and that of the receiver. 
For the special case when the transmitters are sufficiently separated from each other, we show that the optimal magnetic beamforming reduces to the simple transmitter selection scheme  \cite{BHChoi} where all power is allocated to one single transmitter that has the highest  mutual inductance value with the receiver.   
\item To demonstrate the performance  gain of magnetic beamforming, we  compare it to an uncoordinated MISO WPT system with equal current allocation over all  transmitters \cite{Gang}, as well as the transmitter selection scheme \cite{BHChoi}. 
We also compare the performance of distributed WPT with magnetic beamforming versus centralized WPT subject to the same total size of transmitter coils. 
\item Next, by applying adaptive magnetic beamforming with the optimal solution derived, we formulate the node placement problem to jointly optimize the transmitter locations to maximize the minimum power deliverable to the receiver in a given one-dimensional (1D) region, i.e., a line of finite length where the receiver can be located in any point in the line. Although simplified, the 1D case is studied first  for the purpose of exposition as well as drawing useful insights.   
The formulated problem is non-convex, while  we propose an iterative  algorithm for solving it approximately by utilizing the fact that the transmitters should be symmetrically located over  the mid-point of the  target line to maximize the minimum deliverable power. 
We present extensive simulation results to verify the effectiveness of our proposed  transmitter location optimization algorithm in improving both the minimum and average deliverable power over the target line as compared to a heuristic design that uniformly locates the transmitters. 
\item At last, we extend the node placement problem to the more  general  two-dimensional (2D)  target region case, i.e., a disk in 2D with a finite radius. 
A practical scenario for this  setup could be a round table with built-in wireless chargers mounted below its surface, and a receiver that can be freely placed on its surface.
Using an example of five transmitters, we show that the design approach for the 1D case can be similarly applied to obtain the optimal locations of the transmitters under the 2D setup with magnetic beamforming to maximize the minimum deliverable power over the target disk region,  by exploiting the  property of  rotational symmetry in 2D.
\end{itemize}

The rest of this paper is organized as follows.  
Section II introduces the system model. Section III formulates the magnetic beamforming problem and presents its optimal solution. 
Section IV formulates the node placement problem for the 1D target region case, and presents an iterative algorithm for solving it. 
Section V presents simulation results  for the 1D case. 
Section VI extends the node placement problem to the 2D target region case with an example of five transmitters.  
Finally, we conclude the paper in Section VII.
%*************************
%*************************
%*************************
%*************************
\section{System Model}  \label{Sec:System-Setup} 
\vspace{-1mm}
In this paper, we consider a MISO MRC-WPT  system with $N\ge1$ identical single-coil transmitters, indexed by $n$,  $n \in \{1,\ldots,N\}$, and a single-coil receiver, indexed by $0$ for convenience. 
It is assumed that all the transmitters and receiver  are each equipped with a Bluetooth communication module to enable information exchange among them to achieve coordinated WPT \cite{Rez}.
Each transmitter $n$ is connected to a stable energy source supplying sinusoidal voltage over time given by  $\tilde{v}_n(t)=\real\{v_ne^{j w t} \}$,  where ${v}_n$ is a complex number denoting the steady state voltage in phasor form  and  $w>0$ denotes its     angular frequency. Note that $\real{\{\cdot\}}$ represents the real part of a complex number. 
On the other hand, the receiver is connected to an electric load, e.g., battery of a mobile phone.  Let $\tilde{i}_n(t)=\real\{i_ne^{j w t}\}$, where  ${i}_{n}=\overline{i}_n+j \hat{i}_n$, with $j^2=-1$, denotes the steady state current at  transmitter $n$.  
This current produces a time-varying magnetic flux in the transmitter coil, which passes through the receiver coil  and induces time-varying current in it.  
We  denote  $\tilde{i}_0(t)=\real\{i_0e^{j w t}\}$,  with ${i}_0=\overline{i}_0+j \hat{i}_0$, as the steady state current at the receiver.
\begin{figure}
	\begin{center}
		\subfigure[Case 1: $N$ is even]
		{\scalebox{0.52}{\includegraphics{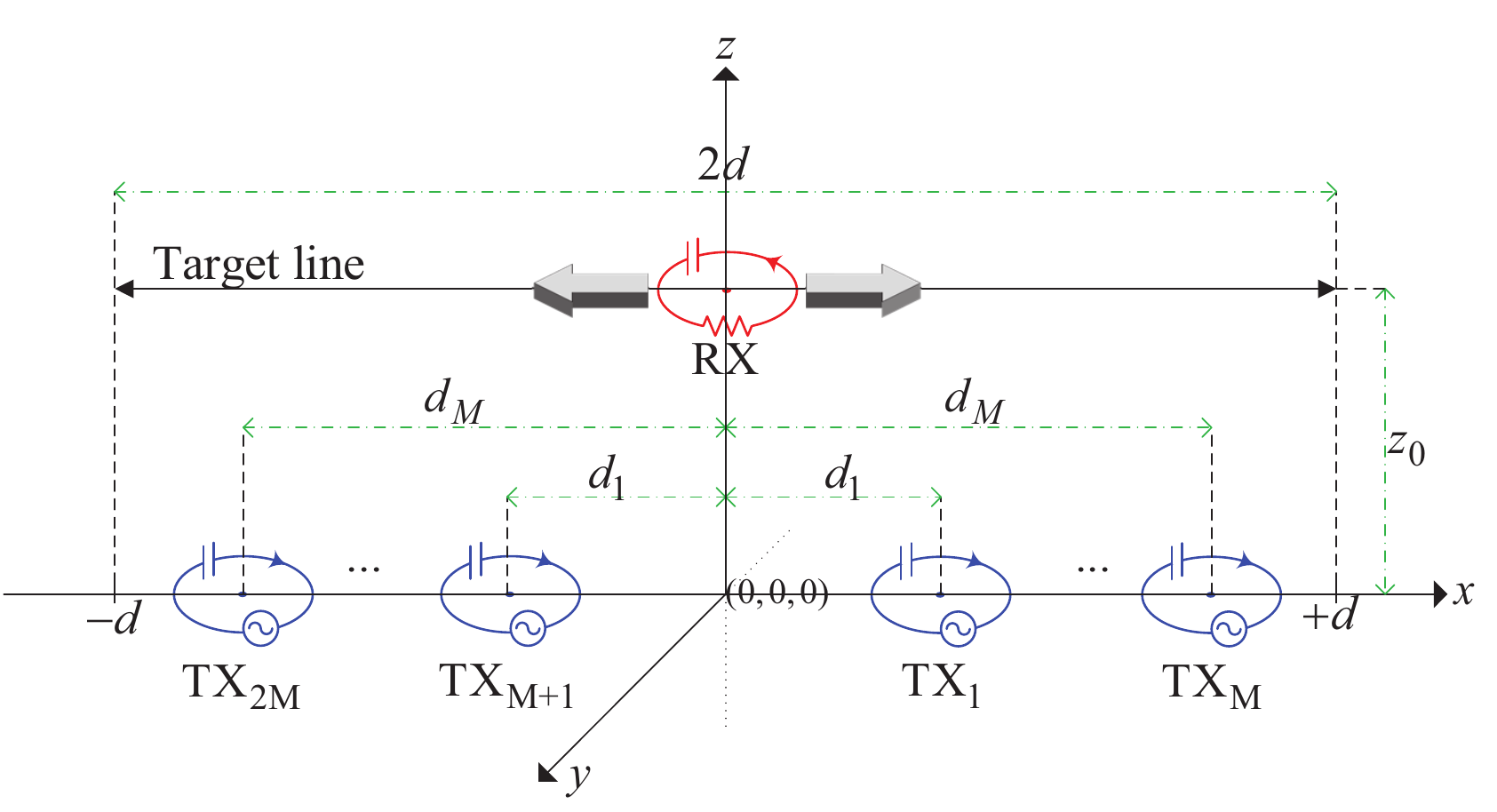}}} 
		\subfigure[Case 2: $N$ is odd]
		{\scalebox{0.52}{\includegraphics{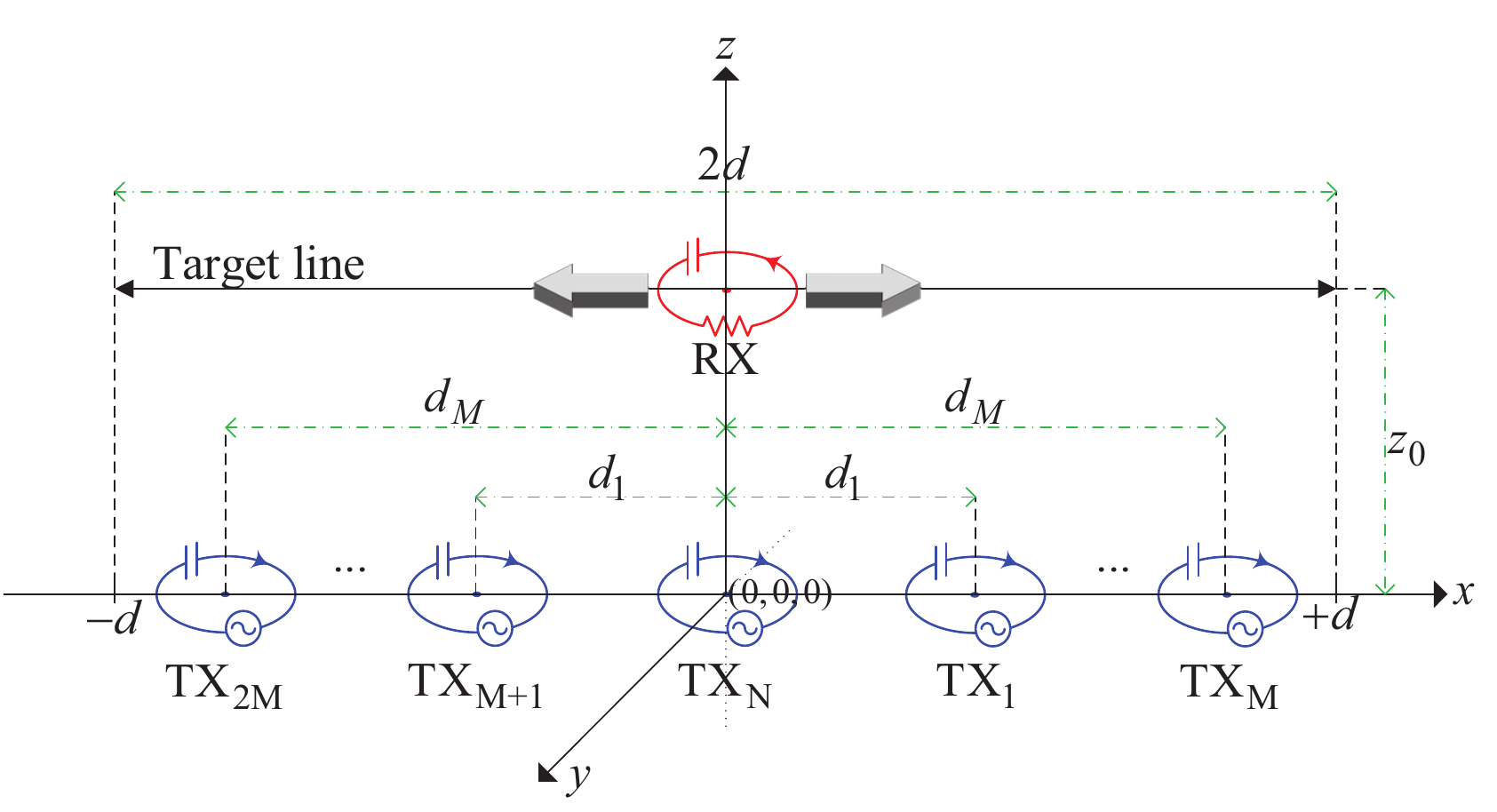}}}  
	\end{center} 
	\vspace{-2mm}
	\caption{MISO MRC-WPT system setup with 1D target line.} \label{fig:MISO-Linear} 
	\vspace{-2mm}
\end{figure} 

First, we consider the case of  WPT in 1D.  
As shown in Fig. \ref{fig:MISO-Linear}, we assume that the receiver can move horizontally along a line that lies in the $(x,y)$ plane with its ($x,y$)-coordinates satisfying  $|x| \le d$, with $d>0$,  and $y=0$ while it has a fixed height of  $z=z_0$, $z_0>0$. We denote this line as the 1D {\it target line}. 
Note that  $|\cdot|$ denotes the absolute value of a real/complex number. 
The transmitters are also installed horizontally at fixed locations along a line that is in parallel with and below  the target line, at a fixed height of $z=0$.   
Let $x_n$ with $|x_n|\le d$  ($x_0$ with $|x_0|\le d$) denote  the location of transmitter $n$  (receiver) over the $x$-axis. 
In this paper, we consider  that $x_n$'s are symmetric over $x=0$.\footnote{We will show later in Section \ref{Sec:Node Placement Optimization} that such symmetric structure of the transmitters maximizes the minimum  deliverable power over the target line.}    
Hence, we  consider the following two  cases for the symmetric deployment of the transmitters.  
\begin{itemize}
\item Case $1$: $N$ is even. In this case, let $M=N/2$ and we set  $x_{n}=-x_{M+n}=d_n$, with $0\le d_{n}\le d$,  $n=1,\ldots,M$, as shown in Fig. \ref{fig:MISO-Linear}(a).  
\item Case $2$: $N$ is odd. In this case, let $M=(N-1)/2$ and  we set  $x_{n}=-x_{M+n}=d_n$, with $0\le d_{n}\le d$, $n=1,\ldots,M$, and $x_{N}=0$, as shown in Fig. \ref{fig:MISO-Linear}(b).
\end{itemize}  

Let  $r_{\text{tx}}>0$ ($r_{\text{rx,p}}>0$), $l_{\text{tx}}>0$ ($l_{\text{rx}}>0$), and $c_{\text{tx}}>0$ ($c_{\text{rx}}>0$)  denote the  parasitic resistance,   the self-inductance, and the  capacity of the compensator in each transmitter (receiver), respectively.  Denote $r_{\text{rx,l}}>0$ as  the resistance of the load at the receiver. Accordingly, we use  $r_{\text{rx}}=r_{\text{rx,p}}+r_{\text{rx,l}}$ to represent the total ohmic resistance of the receiver. 
By assuming  that the coil  of each of the  transmitters as well as the receiver consists of multiple closely winded turns of round-shaped wire, we obtain $r_{\text{tx}}= 2 \sigma_\text{tx} b_\text{tx} e_{\text{coil},\text{tx}}/e_{\text{wire},\text{tx}}^2$ and $r_{\text{rx,p}}=2  \sigma_\text{rx} b_\text{rx} e_{\text{coil},\text{rx}}/e_{\text{wire},\text{rx}}^2$, 
where $e_{\text{coil},\text{tx}}$ ($e_{\text{coil},\text{rx}}$),  $e_{\text{wire},\text{tx}}$ ($e_{\text{wire},\text{rx}}$), $\sigma_\text{tx}$ ($\sigma_\text{rx}$), and $b_\text{tx}$  ($b_\text{rx}$)  are  the average radius of the coil of each transmitter (receiver), the radius of the wire used to make the coil, the ohmic resistivity of the wire, and the number of  turns of the coil, respectively.\footnote{In practice, multi-layer wiring technique can be used to reduce the thickness of each receiver coil  such that it can be easily fitted into a small-size  electronic device, e.g., smartphone  (see Appendix E in \cite{MRVM-Zhang-1} for more detail).}    
Furthermore, we obtain $l_\text{tx}=\mu b_{\text{tx}}^2 e_{\text{coil},\text{tx}} (\ln(8e_{\text{coil},\text{tx}}/e_{\text{wire},\text{tx}})-2)$ and $ l_\text{rx}=\mu b_{\text{rx}}^2 e_{\text{coil},\text{rx}}  (\ln( 8e_{\text{coil},\text{rx}}/e_{\text{wire},\text{rx}})-2)$, where $\mu=4 \pi \times 10^{-7}$N/A$^2$ is the magnetic permeability of the air \cite{MRVM-Zhang-1}.  
The capacities of compensators are then chosen such that the natural frequencies of the transmitters and receiver are the same as the operating  frequency adopted for the input voltage sources, i.e., $w$. 
Hence, we set $c_\text{tx}=1/(l_\text{tx} w^2)$ and $c_\text{rx}=1/(l_\text{rx} w^2)$.    
This helps compensate the reactive power consumed by the self-inductance of the coil at each transmitter/receiver, but in general reminiscent  reactive power presents due to the magnetic coupling between the transmitters and receiver.   
In practice, when the reactive power consumption in the MRC-WPT system is large, the transmitters' source voltages may  spike \cite{Tal}.  
To keep the MRC-WPT system practically feasible, either the peak voltage/current constraints  for individual transmitters need to be considered \cite{Gang}, or equivalently the complex power drawn from these sources should be minimized \cite{Kisseleff}.  
In these cases, devising the optimal magnetic beamforming solution is more complicated and hence investigating  the node placement optimization based on it would be  intractable.    
For simplicity, in this paper we only consider the  active power consumption, and hence use `power' and `active power' interchangeably, unless specified otherwise.

Let $h_{nk}$ and $h_{n0}$ be real numbers denoting the mutual inductance between the coils of transmitters $n$ and $k$, with $k\neq n$, as well as that between transmitter $n$ and the receiver, respectively.  
As shown in Fig. \ref{fig:MISO-Linear}, we consider that   the transmitters and receiver  are all located horizontally in parallel $(x,y)$ planes at heights $z=0$ and $z=z_0$, respectively,  hence their orientations are identical. In this case, from the so-called Conway's mutual inductance formula \cite{Conway},  we have\footnote{In this paper,  we assume a free space propagation model where the transmitters and receiver are all placed in an environment without   any nearby externalities absorbing and/or reflecting magnetic fields. This helps deriving  tractable solutions for the mutual inductance values given in (\ref{eq:h_nk}) and (\ref{eq:h_n0}).}           
\begin{align} 
&h_{nk}= \mu \pi b_\text{tx}^2 e_{\text{coil},\text{tx}}^2 \int_0^{\infty} J_0\left(d_{nk}u\right) \big(J_1\left(e_{\text{coil},\text{tx}} u\right) \big)^2 du, \label{eq:h_nk}\\
&h_{n0}=\mu \pi b_\text{tx} b_\text{rx} e_{\text{coil},\text{tx}} e_{\text{coil},\text{rx}} \int_0^{\infty} J_0\left(d_{n0}u\right) J_1\left(e_{\text{coil},\text{tx}} u\right)\nonumber \\
&\hspace{45.8mm}J_1\left(e_{\text{coil},\text{rx}} u\right) e^{-z_0 u}du,   \label{eq:h_n0}
\end{align}
where  $d_{nk}=|x_n-x_k|$, $d_{n0}=|x_n-x_0|$, and $J_\alpha(u)=\sum_{m=0}^{\infty} (-1)^m(u/2)^{2m+\alpha}/(m! (m+\alpha)!)$ is the  Bessel function of the first kind of order $\alpha \in \{0,1\}$  with $(\cdot)!$ denoting the factorial of a positive integer.    
The integration terms in (\ref{eq:h_nk}) and (\ref{eq:h_n0}) can be computed numerically, while there are no closed-form analytical expressions for them. 
In practice, the transmitters and  receiver commonly use small coils for WPT; therefore,   $h_{n0}$ in (\ref{eq:h_n0}) can be  simplified as follows.
\begin{lemma} \label{Lemma:mutualInductance}
If $e_{\text{coil},\text{tx}}, e_{\text{coil},\text{rx}} \ll z_0$, we have	
\begin{align} 
h_{n0} \approx \beta  \dfrac{2z_0^2-d_{n0}^2}{\sqrt{\left(z_0^2+d_{n0}^2\right)^5}}, \label{eq:hn0-simplified}
\end{align}
where $\beta =\mu \pi b_\text{tx} b_\text{rx} e_{\text{coil},\text{tx}}^2 e_{\text{coil},\text{rx}}^2/4$ is a constant with the given coil parameters.
\end{lemma}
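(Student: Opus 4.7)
The plan is to evaluate the integral representation (\ref{eq:h_n0}) under the small-coil regime by Taylor-expanding the two $J_1$ factors and then recognizing the remaining integral as (a derivative of) a standard Laplace transform of $J_0$.

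First I would exploit the damping factor $e^{-z_0 u}$, which effectively confines the integrand to $u \lesssim 1/z_0$. Under the hypothesis $e_{\text{coil},\text{tx}}, e_{\text{coil},\text{rx}} \ll z_0$, both $e_{\text{coil},\text{tx}} u$ and $e_{\text{coil},\text{rx}} u$ are therefore uniformly small over the effective range of integration. Using the series definition of $J_1$ (i.e., $J_1(x)=x/2+O(x^3)$ for small $x$), I would replace
\begin{equation*}
J_1(e_{\text{coil},\text{tx}} u)\,J_1(e_{\text{coil},\text{rx}} u) \;\approx\; \tfrac{1}{4}\,e_{\text{coil},\text{tx}} e_{\text{coil},\text{rx}}\, u^2,
\end{equation*}
which introduces only relative errors of order $(e_{\text{coil},\text{tx}}/z_0)^2$ and $(e_{\text{coil},\text{rx}}/z_0)^2$. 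Substituting into (\ref{eq:h_n0}) reduces the problem to evaluating $I(z_0,d_{n0}) := \int_0^\infty u^2\, J_0(d_{n0} u)\, e^{-z_0 u}\, du$.

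Next I would invoke the Laplace-transform identity $\int_0^\infty e^{-s u} J_0(a u)\, du = 1/\sqrt{s^2+a^2}$ and differentiate twice in $s$ under the integral sign (justified by absolute convergence of all the resulting integrals for $s=z_0>0$). A short computation gives
\begin{equation*}
\frac{d^2}{ds^2}\frac{1}{\sqrt{s^2+a^2}} \;=\; \frac{2s^2-a^2}{(s^2+a^2)^{5/2}},
\end{equation*}
so evaluating at $s=z_0$, $a=d_{n0}$ yields $I(z_0,d_{n0}) = (2z_0^2 - d_{n0}^2)/(z_0^2+d_{n0}^2)^{5/2}$. Multiplying by the prefactor $\mu\pi b_\text{tx}b_\text{rx}e_{\text{coil},\text{tx}}e_{\text{coil},\text{rx}}$ appearing in (\ref{eq:h_n0}) together with the factor $\tfrac{1}{4}e_{\text{coil},\text{tx}}e_{\text{coil},\text{rx}}$ from the Bessel expansion recovers exactly $\beta\,(2z_0^2-d_{n0}^2)/\sqrt{(z_0^2+d_{n0}^2)^5}$, as claimed.

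I do not expect any substantial obstacle. The main step is really just the Bessel small-argument expansion; the potentially delicate point, namely interchanging the approximation with the integration, is benign because $e^{-z_0 u}$ provides a uniform integrable dominator and the omitted terms in the $J_1$ series contribute only higher powers of $u$ weighted by further factors of the small parameters $e_{\text{coil},\text{tx}}/z_0$ and $e_{\text{coil},\text{rx}}/z_0$. Thus the approximation in (\ref{eq:hn0-simplified}) is the leading term of a convergent asymptotic expansion in these two ratios.
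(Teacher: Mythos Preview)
Your proposal is correct and follows essentially the same approach as the paper: both replace the $J_1$ factors by their leading small-argument term $x/2$, reduce (\ref{eq:h_n0}) to $\beta\int_0^\infty u^2 J_0(d_{n0}u)e^{-z_0 u}\,du$, and evaluate this via the Laplace transform $\mathcal{L}\{J_0(au)\}(s)=1/\sqrt{s^2+a^2}$ together with $\mathcal{L}\{u^2 o(u)\}=\partial^2 O/\partial s^2$. The only cosmetic difference is that the paper bounds the neglected series terms by computing the maximum of $\dot J_{m_1,m_2}(u)e^{-z_0 u}$ explicitly, whereas you invoke the effective cutoff $u\lesssim 1/z_0$ more qualitatively; the substance is the same.
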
 
\begin{proof}
Please see Appendix A. 
\end{proof}

To validate the accuracy of the proposed approximation in (\ref{eq:hn0-simplified}), we consider Case $2$ in Fig. \ref{fig:MISO-Linear}(b) with $N = 5$ identical transmitters, $d = 1$m, and variable $z_0$, where the physical and electrical characteristics of the coils in the transmitters and receiver are given in Tables \ref{tab:Physical-Charac} and \ref{tab:Electrical-Charac} (see Section \ref{Sec:numerical Example}), respectively. 
We assume that the transmitters are uniformly located over $|x|\le 1$m, with $x_1 = x_3 = 0.5$m, $x_2 =  x_4 = 1$m, and $x_5 = 0$. 
Figs. \ref{fig:Mutual_Inductance_veification}(a) and \ref{fig:Mutual_Inductance_veification}(b) compare the actual and approximated values of the  mutual inductance  between transmitter $1$ and the receiver, $h_{10}$, versus the receiver's x-coordinate  $x_0$ under heights of $z_0=0.2$m and $z_0=0.4$m, respectively.  
It is observed that the approximation is  tight in general; whereas there are  discrepancies at $x_0=0$. 
It is also observed that the  discrepancies decrease  when $z_0$ increases. 
\begin{figure}
	\begin{center}
		\subfigure[$z_0=0.2$m]
		{\scalebox{0.52}{\includegraphics{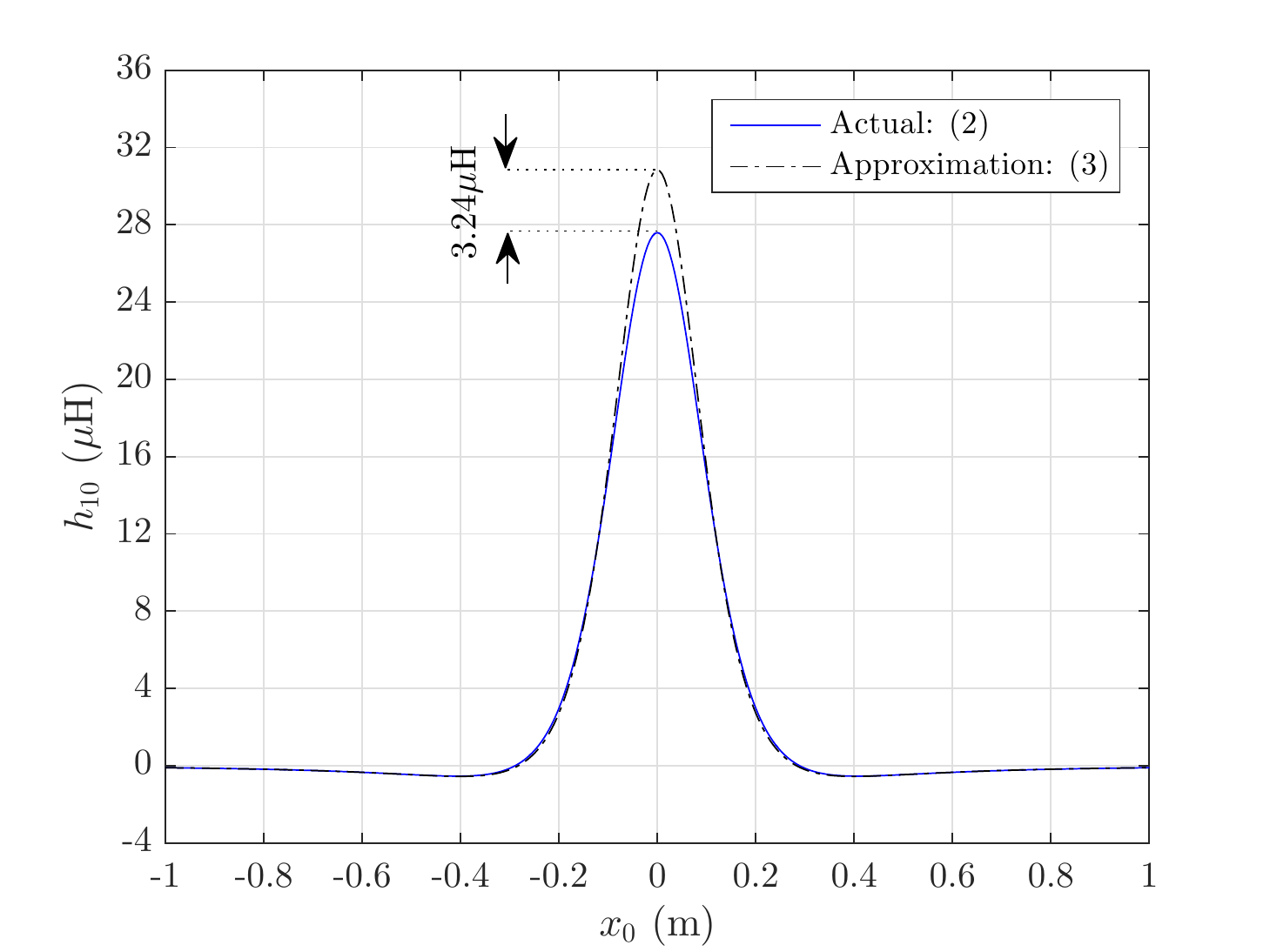}}}  
		\subfigure[$z_0=0.4$m]
		{\scalebox{0.52}{\includegraphics{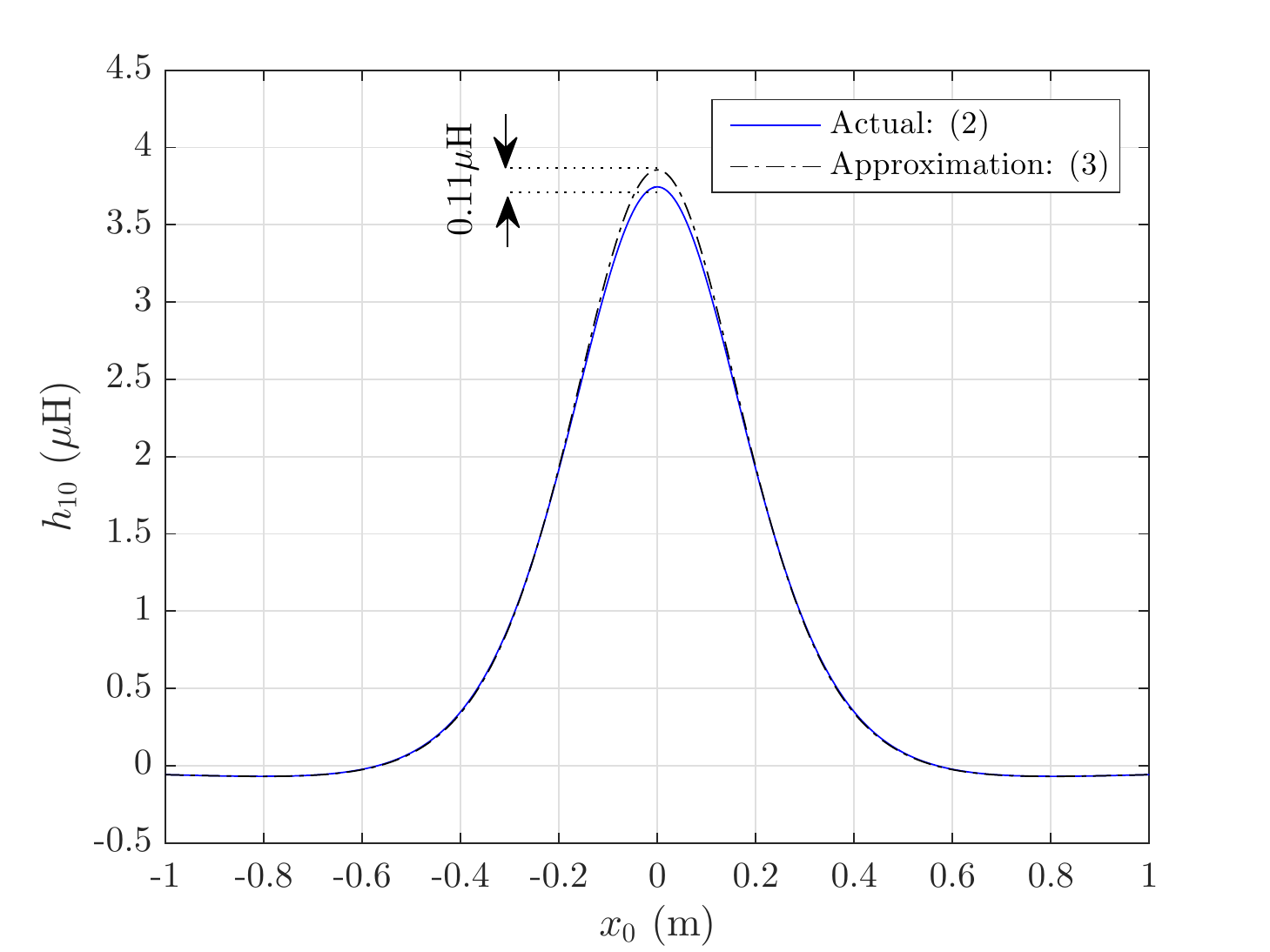}}}
	\end{center} \vspace{-4mm}
	\caption{Actual versus  approximated mutual inductance.}
	\label{fig:Mutual_Inductance_veification} \vspace{-3mm}
\end{figure}
The similar result can be obtained for the mutual inductance  between other transmitters and the receiver, while the peak value of the mutual inductance shifts over the $x$-axis accordingly, i.e., it moves from $x_0=x_1=0$ to $x_0=x_n$ when transmitter $n$ is considered instead of transmitter $1$.  
In this paper, we  use the approximation in (\ref{eq:hn0-simplified}) to formulate the node placement optimization problems in Sections \ref{Sec:Node Placement Optimization} and \ref{Sec:Node Placement Optimization 2D}, while the actual values given by  (\ref{eq:h_n0}) are used for all simulations to achieve the best accuracy. The approximation in (\ref{eq:hn0-simplified}) is acceptable  in our case, since the design objective is to maximize the minimum transferable power to the receiver load, while such minimum occurs when the receiver is sufficiently away from all transmitters.   
In this case, from Fig. \ref{fig:Mutual_Inductance_veification} it is observed that the approximation is indeed much tighter when $x_0$ deviates from zero.  

By applying Kirchhoff's circuit laws to the electric circuits of the transmitters and  receiver in our considered MRC-WPT system shown in Fig. \ref{fig:Electric-Circuit}, we obtain \vspace{-0.1mm} 
\begin{align} \label{eq:in}
&r_\text{tx} i_n - j w h_{n0} i_0 +j w \hspace{-2mm}\sum_{k=1,k\neq n}^{N} \hspace{-2mm}h_{nk} i_k=v_n,~ n=1,\ldots,N,\\
&r_\text{rx} i_0 - j w \sum_{n=1}^{N} h_{n0} i_n=0.\hspace{-2mm} \label{eq:i0_in_relation}
\end{align}

Let $p_n$ and $p_0$ denote the power drawn from the energy source at transmitter $n$ and that delivered to the  load at the single receiver. 
In practice, we have $p_n=\real\left\{v_n i_n^*\right\}$  and $p_0=r_{\text{rx,l}} \left|i_0\right|^2$, where $(\cdot)^*$ denotes the conjugate of a complex number.   
With the results in (\ref{eq:in}) and (\ref{eq:i0_in_relation}), we thus obtain 
\begin{align} 
&p_n= \bigg(r_\text{tx} + \dfrac{w^2}{r_\text{rx}} h_{n0}^2\bigg)|i_n|^2+ \dfrac{w^2}{r_\text{rx}}\hspace{-1mm}\sum_{k=1,k\neq n}^{N} \hspace{-2.5mm} h_{n0}h_{k0}\bigg(\overline{i}_n\overline{i}_k+\hat{i}_n\hat{i}_k\bigg) \nonumber\\
&+w \hspace{-1mm}\sum_{k=1,k\neq n}^{N} \hspace{-2.5mm} h_{nk} \bigg(\hat{i}_n\overline{i}_k-\overline{i}_n\hat{i}_k\bigg), \label{eq:p_n}\\ 
& p_0= \dfrac{w^2 r_{\text{rx,l}} }{r_\text{rx}^2}  \bigg( \bigg(\sum_{n=1}^{N} h_{n0} \overline{i}_n\bigg)^2+\bigg(\sum_{n=1}^{N} h_{n0} \hat{i}_n\bigg)^2\bigg).\label{eq:p_0}
\end{align}  
Accordingly, the sum-power drawn from all transmitters' sources can be derived as 
\begin{align} \label{eq:p_sum}
&p_{\text{sum}}=\sum_{n=1}^{N}p_n=\nonumber\\
&r_\text{tx}\sum_{n=1}^{N}|i_n|^2+  \dfrac{w^2}{r_\text{rx}} \bigg( \bigg(\sum_{n=1}^{N} h_{n0}\overline{i}_n\bigg)^2 \hspace{-0.5mm}+\bigg(\sum_{n=1}^{N} h_{n0}\hat{i}_n\bigg)^2 \bigg).
\end{align}
\begin{figure} [t!]
	\centering
	\includegraphics[scale=0.545]{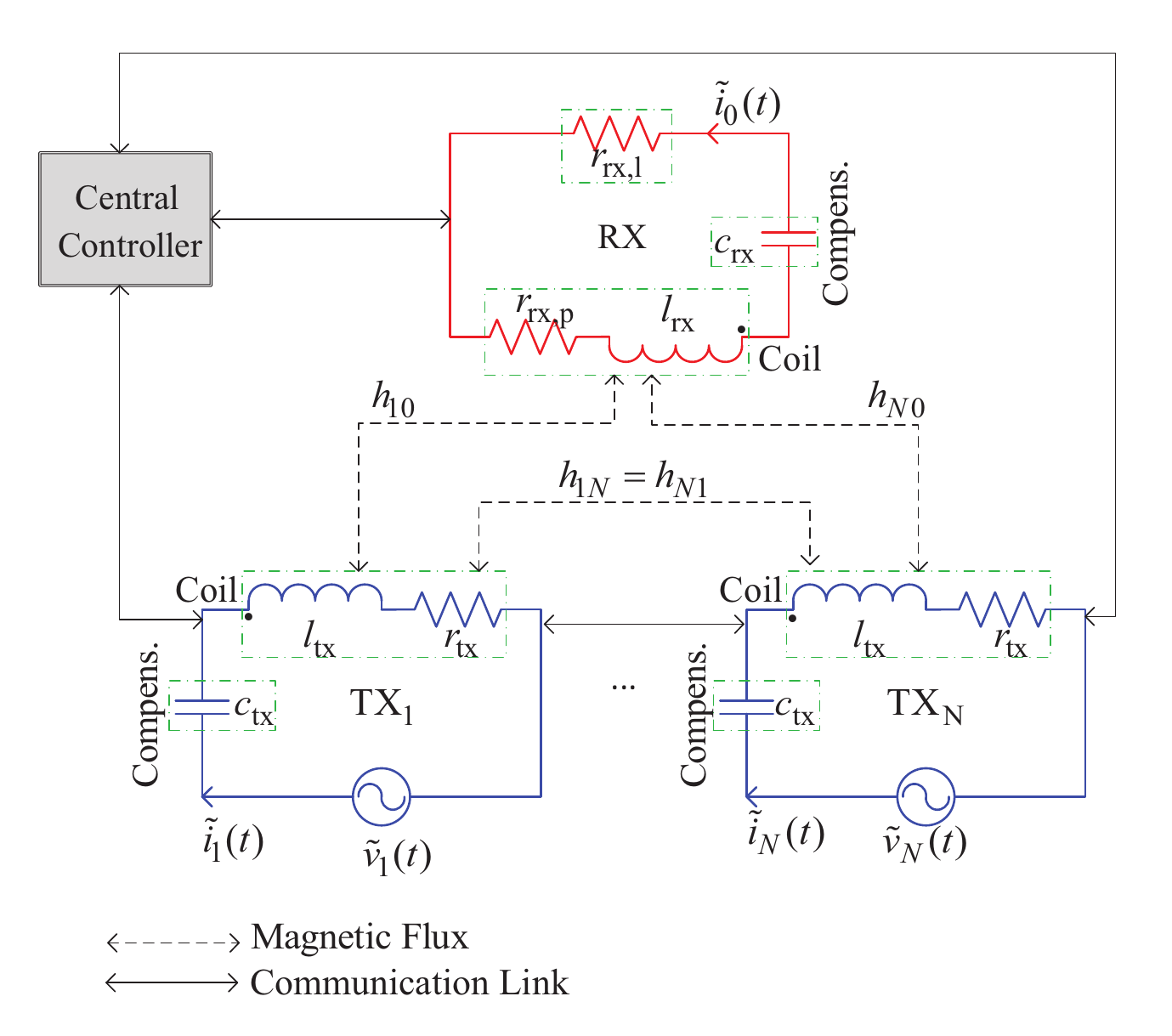}
	\vspace{-3mm}
	\caption{The electric circuit model of a MISO MRC-WPT system.} 
	\label{fig:Electric-Circuit}  \vspace{-2mm}
\end{figure}

From  (\ref{eq:p_n}) and (\ref{eq:p_sum}), it follows that with fixed $i_n$'s, the  power consumption of each individual transmitter depends on all the mutual inductances between the transmitters and the receiver, $h_{n0}$'s, as well as those between any pair of transmitters, $h_{kn}$'s, while their sum-power consumed depends on $h_{n0}$'s only.  
From (\ref{eq:p_0}) and (\ref{eq:p_sum}), it also follows that the real-part currents $\overline{i}_n$'s and the imaginary-part  currents $\hat{i}_n$'s contribute in the same way to $p_0$ or  $p_{\text{sum}}$. 
Therefore, in this paper,  we set $\hat{i}_n=0$, $n=1,\ldots,N$, and focus on designing  $\overline{i}_n$'s.\footnote{Note that in this paper, we only consider the sum-power constraint for all transmitters. In the case that the peak power constraint for each individual transmitter is considered, both $\bar{i}_n$'s and $\hat{i}_n$'s should be optimized jointly \cite{Gang}.}   
Moreover, since each $h_{n0}$ is a function of both $x_0$ and $x_n$ with given $z_0$ (see, e.g., (\ref{eq:hn0-simplified})), we re-express $p_0$ and $p_{\text{sum}}$ in (\ref{eq:p_0}) and (\ref{eq:p_sum}) as functions of $x_0$, $x_n$'s, and $\overline{i}_n$'s as  
\begin{align} 
&\hspace{-2mm}p_0(x_0,\{x_n\},\{\overline{i}_n\})=\dfrac{w^2 r_{\text{rx,l}} }{r_\text{rx}^2}   \bigg(\sum_{n=1}^{N} h_{n0} \overline{i}_n\bigg)^2,\label{eq:Fun_p_0}\\
&\hspace{-2mm}p_{\text{sum}}(x_0,\{x_n\},\{\overline{i}_n\})= r_\text{tx}\sum_{n=1}^{N} \overline{i}_n^2+ \dfrac{w^2}{r_\text{rx}} \bigg(\sum_{n=1}^{N} h_{n0}\overline{i}_n\bigg)^2 .\label{eq:Fun_p_sum}
\end{align} 

Next,  we introduce four metrics to evaluate the performance of the  MRC-WPT system, which are the average value, the minimum value, the maximum value, and the min-max ratio of the deliverable power to the receiver load over the target line (or target region in general),  defined as  
\begin{align}
&p_{0,\text{avg}}(\{x_n\},\{\overline{i}_n\})=\int_{-d}^{d} p_0(x_0,\{x_n\},\{\overline{i}_n\}) dx_0, \label{eq:p0ave}\\
&p_{0,\min}(\{x_n\},\{\overline{i}_n\})=\min_{|x_0|\le d}~ p_0(x_0,\{x_n\},\{\overline{i}_n\}), \label{eq:p0min} \\
&p_{0,\max}(\{x_n\},\{\overline{i}_n\})=\max_{|x_0|\le d}~p_0(x_0,\{x_n\},\{\overline{i}_n\}), \label{eq:p0max}\\ 
&\xi(\{x_n\},\{\overline{i}_n\}) = \dfrac{ p_{0,\min}(\{x_n\},\{\overline{i}_n\})}{ p_{0,\max}(\{x_n\},\{\overline{i}_n\})}.  \label{eq:min-max-ratio}
\end{align} 
Note that both the transmitter currents $\overline{i}_n$'s and the transmitter locations $x_n$'s can  influence the above performance metrics; as result, we need to design them jointly to optimize each corresponding performance in general.

In practice, it is desirable to have both large $p_{0,\text{avg}}$ and $p_{0,\max}$ to maximize the WPT efficiency, and yet have acceptably high $p_{0,\min}$ and $\xi$ to achieve uniform performance over the target region. 
However, in general, there are trade-offs in achieving these objectives at the same time, e.g., maximizing $p_{0,\max}$ versus $p_{0,\min}$. 
In the rest of this  paper,  we first design the magnetic beamforming via  adjusting  $\overline{i}_n$'s by assuming fixed locations of the transmitters and receiver ($x_n$'s and $x_0$) to maximize the deliverable power subject to a given sum-power constraint for all transmitters.  
Next, with the obtained optimal magnetic beamforming solution, we optimize the transmitter locations  $x_n$'s to maximize the minimum power deliverable to the receiver over the target region, i.e.,  $p_{0,\min}$ given in (\ref{eq:p0min}), for both the cases of 1D and 2D target regions, respectively.     
%*************************
%*************************
%*************************
%*************************
\section{Optimal Magnetic Beamforming}   \label{Sec:magneticBeamforming}
In this section, we first present the results on  the magnetic beamforming optimization.  
We  then use a numerical example to demonstrate the performance advantages of optimal distributed magnetic beamforming. 
%*************************
\vspace{-2mm}
\subsection{Problem Formulation and Solution} \label{Sec:problem Formulation} 
Assume that $x_n$'s and $x_0$ are given, and hence the mutual inductance values $h_{n0}$'s are known.  
We formulate the  magnetic beamforming problem for designing  the transmitter currents $\overline{i}_n$'s to maximize the deliverable power to the receiver load,  $p_{0}$ given in (\ref{eq:Fun_p_0}), subject to a  maximum sum-power constraint at all transmitters, denoted by $p_{\max}>0$, as follows. 
\begin{align} 
\mathrm{(P1)}:  
\mathop{\mathtt{max}}_{\{\overline{i}_n\}}~&\dfrac{w^2 r_\text{rx,l}}{r_\text{rx}^2}  \bigg(\sum_{n=1}^{N} h_{n0}\overline{i}_n\bigg)^2  \label{eq:p1_obj} \\
\mathtt{s.t.} ~&  r_\text{tx}\sum_{n=1}^{N} \overline{i}_n^2+\dfrac{w^2}{r_\text{rx}}   \bigg(\sum_{n=1}^{N} h_{n0}\overline{i}_n\bigg)^2\le p_{\text{max}}. \label{eq:p1_c1}
\end{align}
(P1) is a non-convex quadratically constrained quadratic programming (QCQP) problem \cite{ConvOpt-Book}, since its objective is to maximize a convex quadratic function in (\ref{eq:p1_obj}).  
However, we obtain the optimal solution to (P1) in the following proposition. 
\begin{proposition} \label{Proposition:Optimal_Current}
The optimal solution to (P1) is given by  $\overline{i}_n^\star$, $n=1,\ldots,N$, with \vspace{-2mm}   
\begin{align} 
\overline{i}_n^\star=\dfrac{h_{n0}\sqrt{p_{\max}}}{\sqrt{  \bigg(\sum_{k=1}^{N} h_{k0}^2 \bigg) \bigg(r_\text{tx}+\dfrac{w^2}{r_\text{rx}} \sum_{k=1}^{N} h_{k0}^2 \bigg) }}.  \label{eq:u_n}
\end{align}
\begin{proof}
Please see Appendix B.
\end{proof}
\end{proposition}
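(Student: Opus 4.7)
The plan is to exploit the fact that both the objective and the quadratic term of the constraint in (P1) depend on the transmitter currents only through the single linear combination $\mathbf{h}^T\bar{\mathbf{i}}$, where $\mathbf{h} = [h_{10},\ldots,h_{N0}]^T$ and $\bar{\mathbf{i}} = [\bar{i}_1,\ldots,\bar{i}_N]^T$. In this notation the objective is $(w^2 r_\text{rx,l}/r_\text{rx}^2)(\mathbf{h}^T\bar{\mathbf{i}})^2$ and the constraint reads $r_\text{tx}\|\bar{\mathbf{i}}\|^2 + (w^2/r_\text{rx})(\mathbf{h}^T\bar{\mathbf{i}})^2 \le p_{\max}$, so only two scalar quantities---the inner product $\mathbf{h}^T\bar{\mathbf{i}}$ and the norm $\|\bar{\mathbf{i}}\|^2$---actually matter.

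First, I would argue that at the optimum $\bar{\mathbf{i}}$ must be collinear with $\mathbf{h}$. Given any feasible $\bar{\mathbf{i}}$, set $\alpha = (\mathbf{h}^T\bar{\mathbf{i}})/\|\mathbf{h}\|^2$ and consider the replacement $\bar{\mathbf{i}}' = \alpha\mathbf{h}$. Then $\mathbf{h}^T\bar{\mathbf{i}}' = \mathbf{h}^T\bar{\mathbf{i}}$, so the objective is preserved, while by the Cauchy--Schwarz inequality $\|\bar{\mathbf{i}}'\|^2 = (\mathbf{h}^T\bar{\mathbf{i}})^2/\|\mathbf{h}\|^2 \le \|\bar{\mathbf{i}}\|^2$, which can only tighten the linear term of the constraint and therefore preserves feasibility. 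Consequently the search can be restricted without loss of optimality to the matched-filter direction $\bar{i}_n = \alpha h_{n0}$ for $n = 1,\ldots,N$.

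Substituting this rank-one structure back into (P1) collapses it to the scalar problem of maximizing $\alpha^2(\sum_k h_{k0}^2)^2$ subject to $\alpha^2(\sum_k h_{k0}^2)\bigl[r_\text{tx} + (w^2/r_\text{rx})\sum_k h_{k0}^2\bigr] \le p_{\max}$. Both the objective and the left-hand side of the constraint are strictly increasing in $\alpha^2$, so the constraint must be active at the optimum, giving $(\alpha^\star)^2 = p_{\max}\bigl/\bigl[(\sum_k h_{k0}^2)\bigl(r_\text{tx} + (w^2/r_\text{rx})\sum_k h_{k0}^2\bigr)\bigr]$. Taking the positive root (the sign of $\alpha$ is immaterial because the objective depends only on $\alpha^2$) and multiplying by $h_{n0}$ yields exactly the closed form stated in (\ref{eq:u_n}).

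The only delicate point is the global-optimality claim, because (P1) maximizes a convex quadratic over a non-convex feasible set, so KKT conditions alone would not suffice. The Cauchy--Schwarz step above side-steps this difficulty by exhibiting, for every feasible point, an equally good feasible point lying on the one-dimensional ray $\{\alpha\mathbf{h} : \alpha \in \mathbb{R}\}$, on which the optimization is trivial; this effectively certifies that the matched-filter solution is globally optimal. I expect this to be the only step that requires any real argument---the rest is elementary algebra.
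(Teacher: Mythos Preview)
Your proof is correct and takes a genuinely different route from the paper. The paper (Appendix~B) proceeds via the KKT conditions of (P1): it writes the Lagrangian, splits into the cases $\lambda=0$ and $\lambda>0$, discards the first because it yields zero objective, solves the stationarity equations in the second to obtain $\bar{i}_n=\kappa h_{n0}$, and then argues that since the KKT conditions are necessary for optimality and only one non-trivial KKT point exists, that point must be the global optimum.

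Your Cauchy--Schwarz reduction is more elementary and, arguably, more robust: by showing that any feasible $\bar{\mathbf{i}}$ can be replaced by its projection onto $\mathrm{span}\{\mathbf{h}\}$ without worsening either the objective or the constraint, you certify global optimality directly, with no appeal to first-order conditions or constraint qualifications. The paper's KKT route buys a mechanical recipe that would generalize more readily to, say, additional per-transmitter constraints (as in the cited work~\cite{Gang}), where the matched-filter structure need not survive; your route buys a cleaner and self-contained argument for the specific problem at hand. One small slip worth fixing: the feasible set of (P1) is actually convex (the constraint is a sublevel set of a convex quadratic, i.e.\ an ellipsoid), so the non-convexity of (P1) stems solely from maximizing a convex objective, not from the feasible region. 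This does not affect your argument in any way.
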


\textit{Remark}: The magnetic beamforming for the case of general complex-valued mutual inductance and/or unequal transmitter parameters is also  recently investigated in \cite{Lang,Gang}. However, in Proposition \ref{proposition:QN}, the beamforming solution is derived for the special case of identical transmitters with real-valued mutual inductance, which has a simpler structure compared to that in \cite{Lang,Gang} and thus  facilitates our node placement design to be discussed later in Sections \ref{Sec:Node Placement Optimization} and \ref{Sec:Node Placement Optimization 2D}.  

From (\ref{eq:u_n}), it follows that the current allocated to each transmitter $n$ is  proportional to the mutual inductance between its coil and that of the receiver, $h_{n0}$. 
This is due to the fact that in (\ref{eq:u_n}), the denominator  is the same for all transmitters, while only the numerator changes linearly with  $h_{n0}$.  Moreover, it can be seen that when there exists an $n$ such that  $|h_{n0}|\gg |h_{k0}|$, $\forall k\neq n$, then  $\overline{i}^\star_{k}\approx 0$. 
This means that all transmit power is allocated to  transmitter $n$ which has the dominant mutual inductance magnitude  with the receiver (e.g., when the receiver is directly above transmitter $n$ and more far apart from its adjacent transmitters), i.e., the transmitter selection technique \cite{BHChoi} is optimal.   
To implement the optimal magnetic beamforming  solution in practice, each transmitter $n$ needs to estimate the mutual inductance between its coil and that of the receiver, $h_{n0}$,  in real time \cite{MRVM-Zhang-1}, and send it to a central controller via e.g. the Bluetooth communication considered in the Rezence specification \cite{Rez}. 
Given the information received from all  transmitters, the central controller computes the optimal transmitter currents $\overline{i}_n^\star$'s and sends them to the individual transmitters for implementing distributed magnetic beamforming.  
As shown in Fig. \ref{fig:Electric-Circuit}, it is more convenient to use voltage source than current source at each transmitter  in practice.  
With  $\overline{i}_n^\star$'s derived, the optimal receiver current $i_0^{\star}$ can be obtained from (\ref{eq:i0_in_relation}) as $i_0^{\star}=(j w \sum_{n=1}^{N} h_{n0} \overline{i}_n^{\star})/r_{\text{rx}}$. 
By substituting $\overline{i}_n^\star$'s and $i_0^{\star}$ into (\ref{eq:in}), one can compute the optimal source voltages $v_n^\star$'s  that generate the optimal currents $\overline{i}_n^\star$'s and $i_0^{\star}$ at the transmitters and receiver, respectively, for practical implementation. Specifically, for each transmitter $n$, we have 
\begin{align}
v_n^{\star}=r_\text{tx} \overline{i}_n^{\star} +  \dfrac{w^2 h_{n0} \sum_{k=1}^{N} h_{k0} \overline{i}_k^{\star}}{r_{\text{rx}}} +j w \hspace{-2mm}\sum_{k=1,k\neq n}^{N} \hspace{-2mm}h_{nk} \overline{i}_k^{\star}.
\end{align}
Note that the obtained  $\overline{i}_n^\star$'s, $v_n^\star$'s, and $i_0^\star$ always satisfy the Kirchhoff's circuit laws given in (\ref{eq:in}) and (\ref{eq:i0_in_relation}). 

Next, by substituting  $\overline{i}_n=\overline{i}_n^\star$, $n=1,\ldots,N$, in (\ref{eq:Fun_p_0}), the power delivered to the load with optimal magnetic beamforming is given by \vspace{-1mm}
\begin{align}
&p_0^\star(x_0,\{x_n\})=p_0(x_0,\{x_n\},\{\overline{i}_n^\star\})=\nonumber\\
&\dfrac{r_\text{rx,l}}{r_\text{rx}} \left(1-\dfrac{1}{1+ \dfrac{w^2}{r_\text{rx}r_\text{tx}}\sum_{n=1}^{N}h_{n0}^2}\right)  p_{\max}. \label{eq:p_0^*}
\end{align}
From (\ref{eq:p_0^*}), it follows that the deliverable power with optimal magnetic beamforming is a function of $h_{n0}^2$'s, hence invariant to the signs of individual  $h_{n0}$'s.  
This is expected, since magnetic beamforming  ensures that the magnetic fields generated by different transmitters are constructively added at the receiver. 
%*************************
\subsection{Numerical Example} \label{Sec:numerical Example} 
We consider an MRC-WPT system with $N=5$ identical transmitters and a single receiver that is connected a load with resistance $r_{\text{rx,l}}=100\Omega$.
The physical and electrical characteristics of coils in the transmitters and receiver are given in Tables \ref{tab:Physical-Charac} and \ref{tab:Electrical-Charac}, respectively.
\begin{table}[t!]
	\centering 
	\caption{Physical characteristics of coils}\vspace{-1mm}
	\begin{tabular}{|o|d|o|d|u|}
		\hline
		\vspace{.1mm}Coil & Radius $e_{\text{coil},\text{tx}}$/$e_{\text{coil},\text{rx}}$ (mm) & Number of turns ~~~~~~~~~  $b_{\text{tx}}$/$b_{\text{rx}}$& Wire size  $e_{\text{wire},\text{tx}}$/$e_{\text{wire},\text{rx}}$ (mm) & \hspace{-2mm}Wire resistivity $\sigma_\text{tx}$/$\sigma_\text{rx}$ ~~~~~~~~~($\Omega$/m) \bigstrut\\ \hline
		Transmitter &$50$   &  $400$     & 0.1 &$1.68\times 10^{-8}$ \bigstrut\\ \hline
		Receiver     & $25$&  $200$    & 0.1 &$1.68\times 10^{-8}$   \bigstrut\\ \hline
	\end{tabular}
	\label{tab:Physical-Charac} 	
\end{table}
\begin{table}[t!]
	\centering
	
	\caption{Electrical characteristics  of coils} \vspace{-1mm}
	\begin{tabular}{|o|S|S|S|S|}
		\hline
		\vspace{.1mm}Coil &  Internal resistance $r_\text{tx}$/$r_{\text{rx,p}}$ ($\Omega$) & Self-inductance ~~~~~~~$l_\text{tx}$/$l_\text{rx}$ ~~~~~~~~~~(mH) &  Series compensator $c_\text{tx}$/$c_\text{rx}$ (fF)  \bigstrut\\ \hline
		Transmitter &  $67.20$ &  $63.27$ & $8.71$ \bigstrut\\ \hline
		Receiver     &  $16.80$ &  $7.04$ & $78.29$ \bigstrut\\ \hline
	\end{tabular} \vspace{-2mm}
	\label{tab:Electrical-Charac}
\end{table} 
The material of wire used for manufacturing coils is assumed to be copper.  We set $z_0=0.2$m, $d=1$m (i.e., the line length is $2$m in total), $w=42.6\times 10^6$rad/sec  \cite{Tseng-Novak}, and  $p_{\max}=30$W.  
In this example, we assume that transmitters are uniformly located over $|x|\le 1$m, with $x_{1}=-x_{3}=0.5$m, $x_{2}=-x_{4}=1$m, and $x_{5}=0$. 
For performance comparison, we  consider the uncoordinated WPT with equal current allocation over all transmitters, as well as  the transmitter selection technique which only selects the transmitter with the largest mutual inductance (squared) value with the receiver for WPT with the full transmit power, $p_{\max}$. 

Fig. \ref{fig:p0_3tx_x0} compares the deliverable load power $p_0$ given in (\ref{eq:Fun_p_0}) versus the receiver location $x_0$ by three  schemes: equal (transmitter) current with uniform  (transmitter) location  (ECUL), optimal (transmitter) current with  uniform (transmitter) location (OCUL), and   transmitter selection with  uniform (transmitter) location (TSUL). 
It is observed that  ECUL in general delivers higher power to the load than OCUL and TSUL, and also achieves a larger minimum power over the receiver location $x_0$.   
It is also observed that the three schemes all  tend to deliver more power to the load when the receiver is in close proximity of one of the  transmitters at $x_0=0$, $x_0\pm0.5$m, and $x_0=\pm1$m.
Furthermore, it is observed that TSUL performs quite close to OCUL except in the middle areas between any two adjacent  transmitters, where the minimum deliverable power  occurs.  
This observation is expected since when the receiver is in the middle of any two adjacent transmitters, optimal magnetic beamforming  with both transmitters delivering power to the receiver load achieves a more pronounced combining gain as compared to  the transmitter selection with only one of the two transmitters  selected for WPT. 
\begin{figure} [t!]
	\centering
	\includegraphics[scale=0.52]{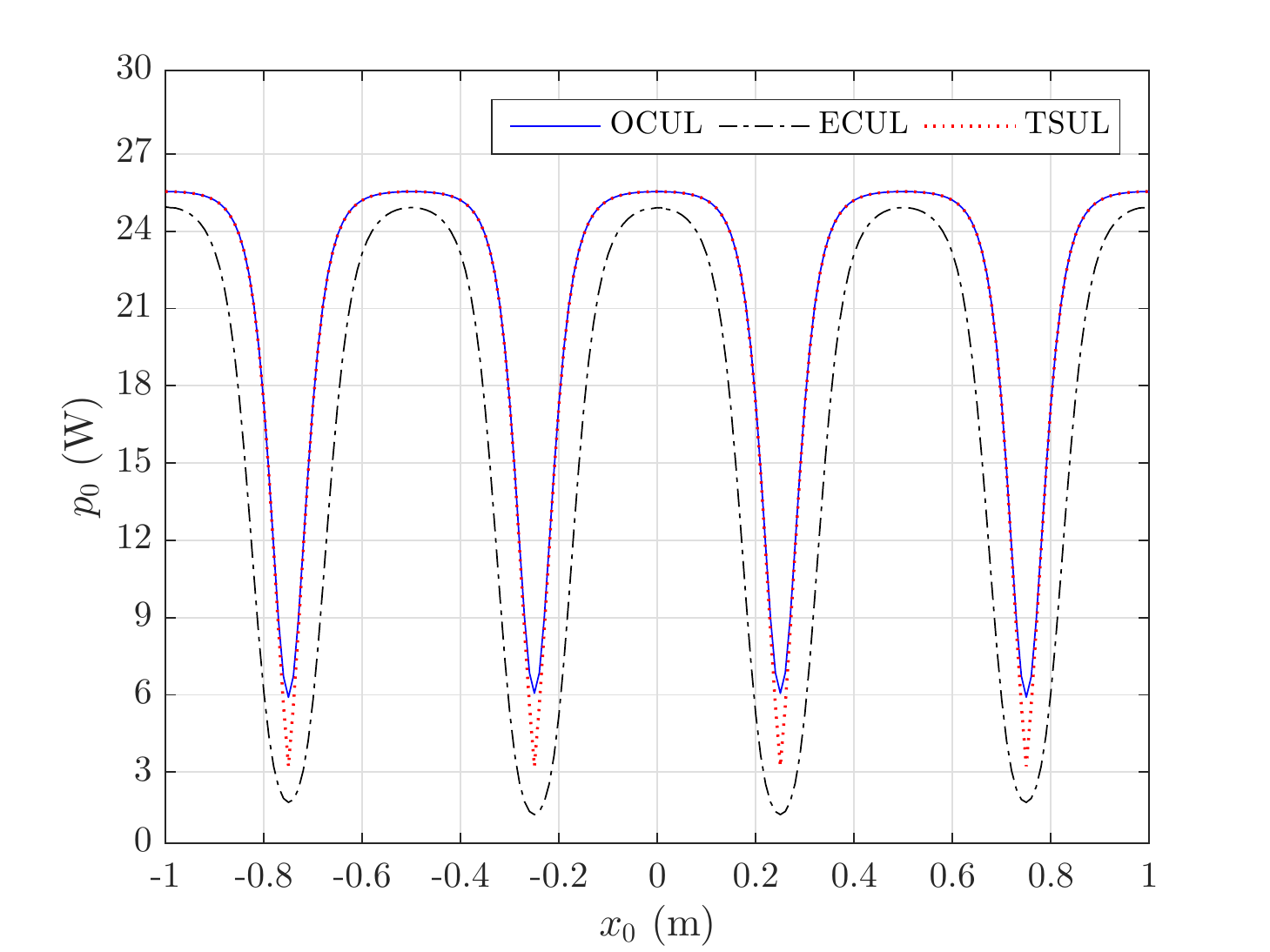} \vspace{-2mm}
	\caption{The load power profile by  distributed WPT.} 
	\label{fig:p0_3tx_x0} \vspace{-2mm}
\end{figure}
\begin{figure} [t!]
	\centering
	\includegraphics[scale=0.52]{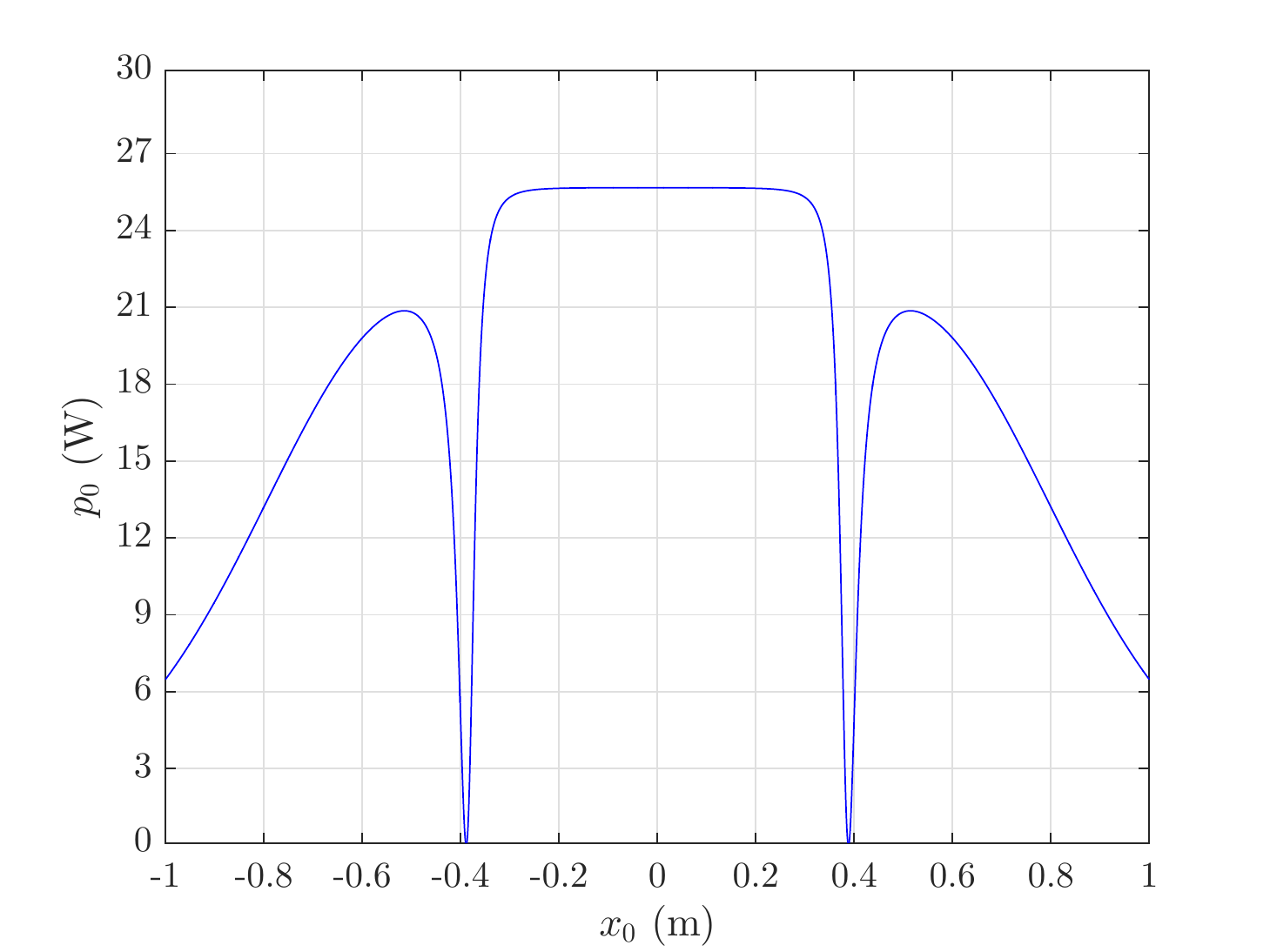} \vspace{-2mm}
	\caption{The load power profile by centralized WPT.} 
	\label{fig:p0_1tx_x0} 
\end{figure}
\begin{table}[t!]  
	\centering
	\vspace{1mm}
	\caption{Performance comparison between distrusted versus centralized WPT.} 	\vspace{-1mm}
	\label{tab:quantitiveComparision}
	\begin{tabular}{|c|c|R|R|R|R|}
		\hline
		\multicolumn{2}{|c|}{Schemes} & $p_{0,\text{avg}}$ (W) & $p_{0,\min}$ (W) & $p_{0,\max}$ (W) & $\xi$ ~~~~~~~~~~ (\%) \\ \hline
		\multirow{3}{*}{Distributed} & ~~~OCUL~~~  & $21.54$ & $5.91$ & $25.54$ & $23.14$ \\ \cline{2-6} 
		& ECUL& $16.79$ & $1.35$ & $24.94$ & $5.41$ \\ \cline{2-6} 
		& TSUL& $21.41$ & $3.23$ & $25.54$ & $12.65$ \\ \hline
		\multicolumn{2}{|c|}{Centralized} & $18.47$ & $0$ & $25.67$ & $0$ \\ \hline
	\end{tabular} \vspace{-4mm}
\end{table} 

Next, we show the performance of centralized WPT, where a single transmitter is located at $x_1=0$ which sends wireless power to a receiver moving along the target line.
For this centralized  transmitter case, we set $b_{\text{tx}}=400$  turns and $e_{\text{coil},\text{tx}}=250$mm, where the radius of its coil is $N=5$ times larger than that of each transmitter in the case of distributed WPT (i.e., $50$mm) for fair comparison. 
Fig. \ref{fig:p0_1tx_x0} plots $p_0$ for centralized WPT  versus $x_0$.   
It is observed that the deliverable power to the load is zero at $x_0=\pm 0.389$m, while its  global and local maximums occur at $x_0=0$ and $x_0=\pm0.514$m, respectively. 
Note that from (\ref{eq:h_n0}), it follows $h_{10}=0$ at $x_0=\pm 0.389$m; as a result, by setting $h_{10}=0$ in (\ref{eq:p_0}), we have $p_0=0$, regardless of the transmit current.

The details of performance  comparison between  distributed versus centralized WPT in terms of the four metrics introduced in Section \ref{Sec:System-Setup} (see (\ref{eq:p0ave})--(\ref{eq:min-max-ratio})) are  given in Table \ref{tab:quantitiveComparision}. 
It is observed that distributed WPT with OCUL  and TSUL achieves similar $p_{0,\max}$ and slightly better $p_{0, \text{avg}}$ compared to  centralized WPT. However, in terms of $p_{0,\min}$ and the min-max load power ratio $\xi$, distributed WPT achieves significant improvement over centralized WPT. 
Although distributed WPT with OCUL achieves the highest $\xi$ of $23.14\%$, it is still far from the ideal uniform power profile with      $\xi=100\%$.
To further improve this performance,  in the next section, we will formulate the node placement problem to design the transmitter locations to maximize the minimum deliverable power  to the receiver load over the target line  jointly with the optimal magnetic beamforming.
It is worth pointing out that the transmitter locations can be optimized with magnetic beamforming  to improve other performance metrics   such as maximizing the average load power, maximizing the min-max ratio of the load power, etc., which will lead to  different optimal transmitter locations in general. 
We leave other possible node placement problem formulations to our  future work.
%*************************
%*************************
%*************************
%*************************
\vspace{-1mm}
\section{Node Placement Optimization in 1D}
\label{Sec:Node Placement Optimization}
In this section, we first present the node placement optimization problem for the 1D target line, and then propose an iterative algorithm to solve it.
%***********************
\vspace{-1mm}
\subsection{Problem Formulation} 
Let $\tau$ denote the minimum deliverable  power to the load over the  target line (see Fig. \ref{fig:MISO-Linear}). 
The node placement problem is thus formulated as  \vspace{-2mm}
\begin{align}  
\mathrm{(P2)}:  \mathop{\mathtt{max}}_{\tau,~\{x_n\}}~&\tau \label{eq:p2-objective}
\\
\mathtt{s.t.} ~&p_0^\star(x_0,\{x_n\}) \ge \tau,~ |x_0|\le d, \label{eq:p2_c1} \\
~&|x_n|\le d,~n=1,\ldots,N, 
\end{align} 
with  $p_0^\star$  given in (\ref{eq:p_0^*}). First, it can be easily shown by contradiction  that the optimal solution $x_n$'s to (P2) must be  symmetric over $x=0$, as shown in Fig. \ref{fig:MISO-Linear}. 
With symmetric transmitter locations, then it  follows that the load power distribution over the target line is also symmetric over $x=0$; as a result, the  constraint (\ref{eq:p2_c1}) only needs to be considered  over $0 \le x_0 \le d$.   
With the above observations, we simplify (P2) for the cases of even and odd $N$, respectively, as follows. 
When $N$  is even, we have \vspace{-2mm}
\begin{align} 
&\mathrm{(P2-EvenN)}: \mathop{\mathtt{max}}_{\tau,~\{d_n\}}~\tau \\
\mathtt{s.t.}~&\sum_{n=1}^{M}f_{z_0}(d_n,x_0) \ge g(\tau),~ 0 \le x_0 \le d, \label{eq:const1:P2-E} \\
~&0\le d_n \le d,~n=1,\ldots,M, \label{eq:const3:P2-E} 
\end{align} 
with 
\begin{align}
&f_{z_0}(d_n,x_0)= \nonumber\\ &\dfrac{\bigg(2z_0^2-(d_n-x_0)^2\bigg)^2}{\bigg(z_0^2+(d_n-x_0)^2\bigg)^5} 
+\dfrac{\bigg(2z_0^2-(d_n+x_0)^2\bigg)^2}{\bigg(z_0^2+(d_n+x_0)^2\bigg)^5},
\end{align}
and 
\begin{align} \label{eq:gt}
g(\tau)=\left\{
\begin{array}{cl}
\dfrac{r_{\text{rx}}^2 r_{\text{tx}}\tau}{w^{2} \beta (r_{\text{rx,l}}p_{\max}-r_{\text{rx}}\tau)} & \text{if } \tau < \dfrac{r_{\text{rx,l}}}{r_{\text{rx}}} p_{\max}\\
\infty & \text{otherwise}. 
\end{array} \right. 
\end{align}
Note that since it  can be easily verified that the constraint in (\ref{eq:p2_c1}) is infeasible regardless of  $x_0$ when $\tau\ge (r_{\text{rx,l}} p_{\max})/r_{\text{rx}}$, we define $g(\tau)=\infty$ for $\tau\ge (r_{\text{rx,l}} p_{\max})/r_{\text{rx}}$ in (\ref{eq:const1:P2-E}) for convenience.  
On the other hand, when $N$  is odd, we have
\begin{align} 
&\mathrm{(P2-OddN)}: \mathop{\mathtt{max}}_{\tau,~\{d_n\}}~\tau \\
\mathtt{s.t.}~&\dfrac{f_{z_0}(0,x_0)}{2}+\sum_{n=1}^{M}f_{z_0}(d_n,x_0)\ge g(\tau),~  0 \le x_0 \le d, \hspace{-1mm}  \label{eq:const1:P2-O}\\
~&0\le d_n \le d,~n=1,\ldots,M.
\end{align} 

(P2$-$EvenN) and (P2$-$OddN) are both non-convex optimization problems  due to the constraints in (\ref{eq:const1:P2-E})  and  (\ref{eq:const1:P2-O}), respectively. 
Thus, it is difficult to solve them optimally. 
In the following, we  propose an iterative algorithm to obtain approximate solutions for them. 
%*************************
%*************************
%*************************
%*************************
\subsection{Proposed Iterative Algorithm} \label{Sec:Optimal Solution}  
In this subsection, we focus on the problem (P2$-$EvenN) for the even $N$ case, while the proposed algorithm can be similarly applied for (P2$-$OddN) in the odd $N$ case.  
In (P2$-$EvenN), we need to find the largest $\tau$, $0 \le \tau \le p_{\max}$,  under which the problem is feasible  over all possible transmitter (one-sided) locations $d_n$'s.     
To this end, we apply the bisection method to find the largest $\tau$ by using the fact that if (P2$-$EvenN) is not feasible for a certain $\dot{\tau}$, $0\le \dot{\tau} \le p_{\max}$, then it cannot be feasible for $\dot{\tau} < \tau \le p_{\max}$.
Similarly, if (P2$-$EvenN) is feasible for $\dot{\tau}$, then it must be feasible for $0\le \tau < \dot{\tau}$.  
The detail of our proposed algorithm is given in the following.

Initialize $\underline{\tau}=0$ and  $\overline{\tau}=p_{\max}$.   At each iteration,  we first set $\tau=(\underline{\tau}+\overline{\tau})/2$, and test the feasibility of
(P2$-$EvenN) given $\tau$ by considering the following feasibility problem. 
\begin{align*} 
\mathrm{(P2F-EvenN)}:\mathop{\mathtt{Find}}~&\{0\le d_n \le d,~n=1,\ldots,M \}\\
\mathtt{s.t.} ~&  (\ref{eq:const1:P2-E}). 
\end{align*} 
If (P2F$-$EvenN) is feasible, we save  its solution as $d_n^\star$, $n=1,\ldots,M$, and  update $\underline{\tau} = \tau$ to search for larger values of $\tau$ in the next iteration. 
Otherwise, if (P2F$-$EvenN) is infeasible, we update $\overline{\tau} =\tau$ to search for smaller values of $\tau$ in the next iteration. 
We stop the search when $\overline{\tau}-\underline{\tau} \le \epsilon$, where $\epsilon>0$ is a small constant controlling the algorithm accuracy.  
It can be easily shown that the algorithm converges after about $\log_2(p_{\max}/\epsilon)$ iterations.     
After convergence, we return $d_n^\star$'s as the solution to (P2$-$EvenN), and set  $x_n^\star=-x_{M+n}^\star=d_n^\star$, $n=1,\ldots,M$, as the solution to (P2) for the even $N$ case. 

Now, we focus on solving the feasibility problem (P2F$-$EvenN) at each iteration.  Since (P2F$-$EvenN) is non-convex, we use the following gradient based method to search for a feasible solution to this problem in an iterative manner. Initialize  $d_n=(2n-1)d/(N-1)$, $n=1,\ldots,M$. 
At each iteration $itr=1,2,\ldots$, given $d_n$'s, we check whether the constraint (\ref{eq:const1:P2-E}) holds or not. 
If the constraint holds, we return $d_n$'s as a feasible solution to (P2F$-$EvenN) and stop the search; otherwise, we update $d_n$'s as follows. 
First, we find $\dot{x}_0=\arg \min_{0\le x_0 \le d} \sum_{n=1}^{M}f_{z_0}(d_n,x_0)$, which can be numerically obtained with given $d_n$'s. 
Define       $f_{\text{min}}=\sum_{n=1}^{M}f_{z_0}(d_n,\dot{x}_0)$, which represents the minimum value of the summation term on the left hand side (LHS) of the constraint in (\ref{eq:const1:P2-E}) over $0 \le x_0 \le d$, with the given  $d_n$'s. Next, we have
\begin{align}
&\dfrac{\partial f_{\min}}{\partial d_n}=\dfrac{\partial f_{z_0}(d_n,\dot{x}_0)}{\partial d_n}=\nonumber\\
& -\dfrac{6 \left(8z_0^4+\left(d_n-\dot{x}_0\right)^4-6z_0^2\left(d_n-\dot{x}_0\right)^2\right) \left(d_n-\dot{x}_0\right)}{\left(z_0^2+\left(d_n-\dot{x}_0\right)^2\right)^6} \nonumber\\
&-\dfrac{6 \left(8z_0^4+\left(d_n+\dot{x}_0\right)^4-6z_0^2\left(d_n+\dot{x}_0\right)^2\right) \left(d_n+\dot{x}_0\right)}{\left(z_0^2+\left(d_n+\dot{x}_0\right)^2\right)^6}. \label{eq:gradient}
\end{align} 
Accordingly, we set  $d_n=\max\{0,d_n-\delta\}$ if  $\partial f_{\min}/ \partial d_n<0$, or  $d_n=\min \{d,d_n+\delta\}$ otherwise, $n=1,\ldots,M$,  with $\delta >0$ denoting a small step size.  
It can be easily verified that the above update helps increase $f_{\min}$ if $\delta$ is chosen to be sufficiently small.  
We repeat the above procedure for a maximum number of iterations, denoted by $itr_{\max} \ge 1$, after which we stop the search and return that (P2F$-$EvenN) is infeasible since the constraint (\ref{eq:const1:P2-E}) still does not hold with all  $d_n$'s derived. 
In practice, the performance of the gradient-based search for the feasible solution to (P2F$-$EvenN) depends on the initial values of $d_n$'s as the search in general converges to a local maximum of the LHS function of (\ref{eq:const1:P2-E}). 
\begin{table}[t!]
	\begin{center}
		\caption{Algorithm  for (P2$-$EvenN).} 
		\hrule\vspace{0.05cm} 
		\textbf{Algorithm $1$}
		\hrule 
		\begin{itemize}
			\item[a)] Initialize  $\epsilon>0$, $\delta>0$, $itr_{\max}\ge 1$, $rpt_{\max}\ge 1$, $\underline{\tau}=0$, and $\overline{\tau}=p_{\max}$.
			\item[b)] {\bf While $\overline{\tau}-\underline{\tau}> \epsilon$ do}:
			\begin{itemize}
				\item[1)] Set $\tau=(\underline{\tau}-\overline{\tau})/2$.  
				\item[2)] Set $Flag=0$, $itr=1$, $rpt=1$, and $d_n=n d/M$, $n=1,\ldots,M$.
				\begin{itemize}
					\item[$\bullet$] {\bf While} $Flag=0$, $itr\le itr_{\max}$, and $rpt\le rpt_{\max}$:
					\begin{itemize}
						\item[$\diamond$] Given $d_n$'s, check  the constraint (\ref{eq:const1:P2-E}). {\bf If} it holds, {\bf then} set $Flag=1$ and go to step 3); {\bf otherwise}, find the derivatives $\partial f_{\min}/\partial d_n$'s as in (\ref{eq:gradient}) and set   $d_n=\max\{0,d_n-\delta\}$ if $\partial f_{\min}/\partial d_n<0$, or  $d_n=\min \{d,d_n+\delta\}$ otherwise, $n=1,...,M$. 
						\item[$\diamond$] Set $itr=itr+1$.
						\item[$\diamond$] {\bf If}   $itr> itr_{\max}$ and $rpt\le rpt_{\max}$, {\bf then} reset the initial points as $d_n=\min \{d,\max \{0,(2n-1)d/(N-1)+\Delta d_n\}\}$, $n=1,\ldots,M$. Set $rpt=rpt+1$ and $itr=1$.  
					\end{itemize}
				\end{itemize}   
				\item[3)]{\bf If} $Flag=1$, {\bf then} set   $d_n^\star=d_n$, $n=1,\ldots,M$, and   $\underline{\tau} = \tau$;  {\bf otherwise} set $\overline{\tau} = \tau$.
			\end{itemize}
			\item[c)] Return $d_n^{\star}$'s as the solution to (P2$-$EvenN). 
		\end{itemize}
		\hrule \label{algorithm:1}
	\end{center} \vspace{-5mm}
\end{table} 
To improve the accuracy of the search,  if it  fails to find a feasible solution to (P2F$-$EvenN) after $itr_{\max}$ iterations, then we repeat the search with a new initial point  given by $d_n=\min \{d,\max \{0,(2n-1)d/(N-1)+\Delta d_n\}\}$, $n=1,\ldots,M$, with randomly generated $\Delta d_n$ which is uniformly distributed over $[-d/(N-1), d/(N-1)]$.  
The maximum number for the set of randomly generated initial points is limited by   $rpt_{\max} \ge 1$, and we decide  (P2F$-$EvenN) is infeasible if we fail to find a feasible solution to (P2F$-$EvenN) with all $rpt_{\max}$ sets of  initial points generated. 
In general,  a larger $rpt_{\max}$ can help improve the overall accuracy of the bisection search, but at the cost of more computational complexity.   

To summarize, the complete algorithm to solve (P2$-$EvenN) is given in Table \ref{algorithm:1}, denoted by Algorithm 1. 
%*************************
%*************************
%*************************
%*************************
\section{Simulation Results}   \label{subsection:PerformanceEvaluation}
\begin{figure} [t!]
	\centering
	\includegraphics[scale=0.52]{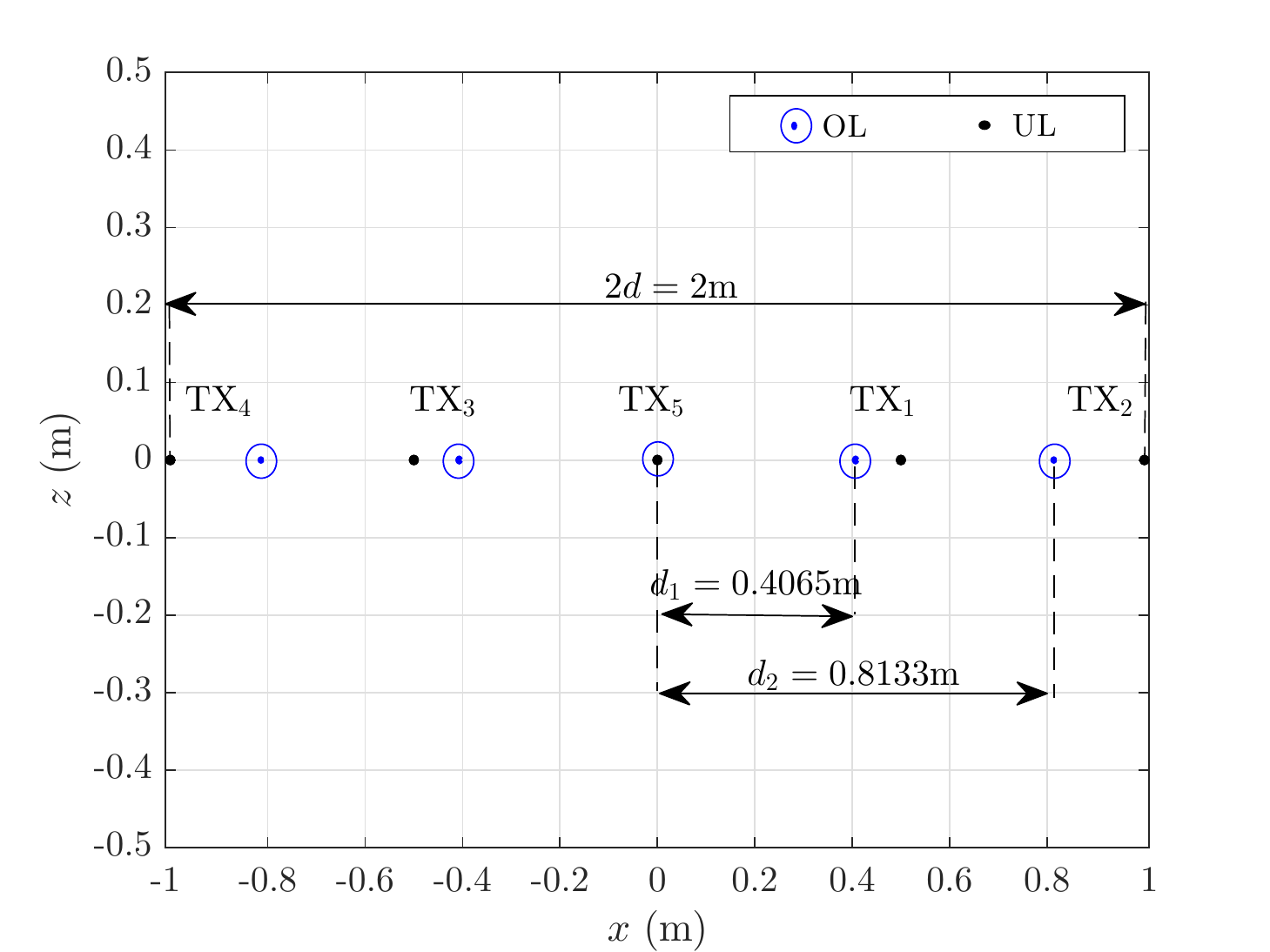} \vspace{-2mm}
	\caption{Optimized versus uniform  transmitter locations.}\label{fig:OptimalLocaion}\vspace{-2mm}
\end{figure}
\begin{figure} [t!]
	\centering
	\includegraphics[scale=0.52]{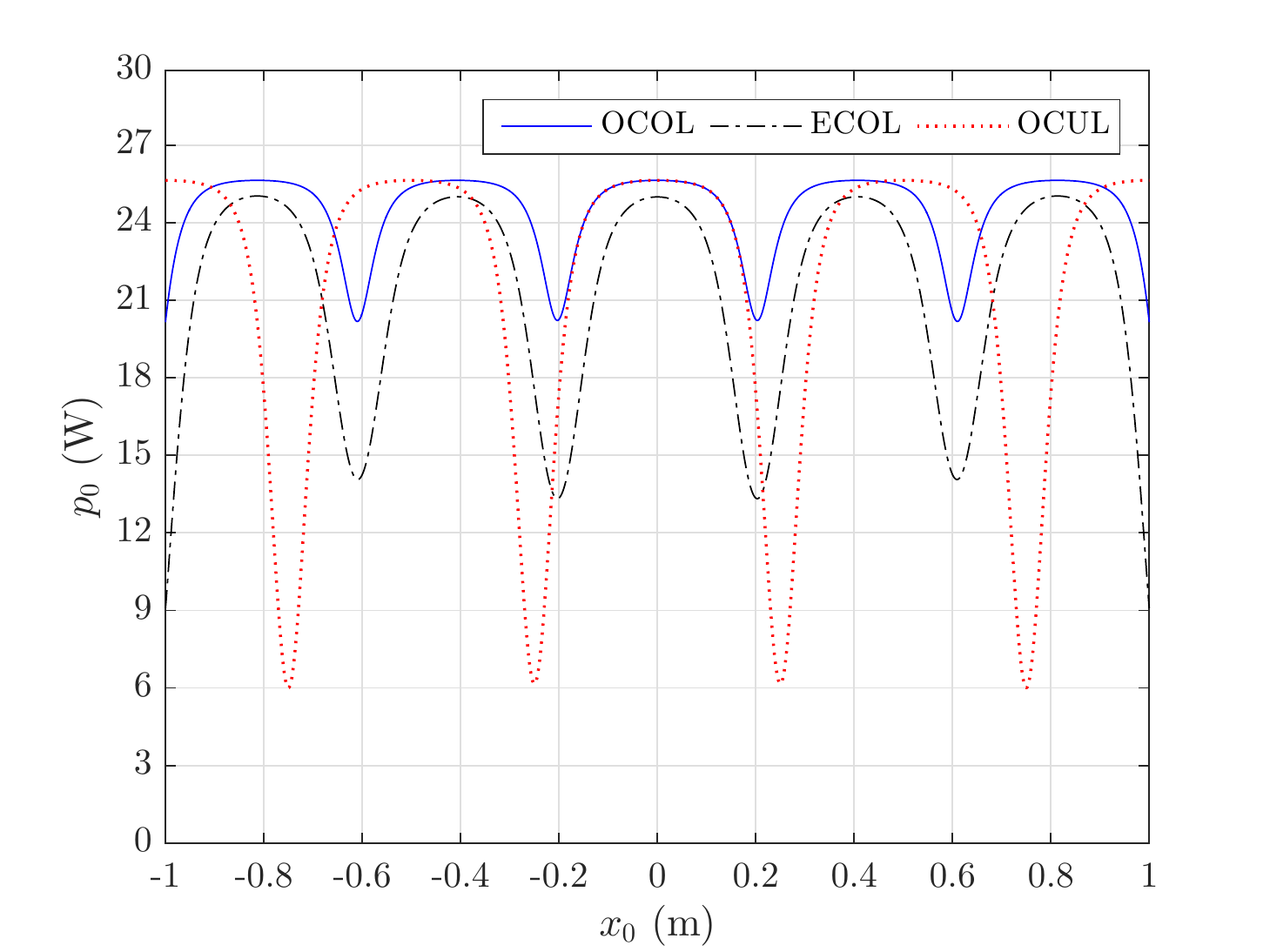} \vspace{-2mm}
	\caption{Load power profile with different transmitter locations and current   allocations.}\label{fig:NodePalacement_3tx}
\end{figure}
\begin{table}[t!] 
	\centering
	\caption{Performance  comparison between different  designs of distributed WPT.} 	\vspace{-1mm}
	\begin{tabular}{|c|o|o|o|o|o|}
		\hline
		~~~~Scheme~~~~&$p_{0,\text{avg}}$ (W)&  $p_{0,\min}$ (W)&  $p_{0,\max}$ (W)&  $\xi$\hspace{25mm} (\%)  \bigstrut\\ \hline OCOL&  $24.38$ &  $20.05$ & $25.54$& $78.50$  \bigstrut\\ \hline ECOL&  $21.31$ &  $8.93$ & $24.93$& $35.82$ \bigstrut\\ \hline OCUL&  $21.55$ &  $5.91$ & $25.54$& $23.14$ \bigstrut\\ \hline
	\end{tabular} 
	\label{tab:quantitiveComparision2}
\end{table}
In this section, we present further  simulation results to evaluate the performance of our proposed transmitter node placement algorithm, i.e., Algorithm $1$.  
We  consider the same system setup as that in Section \ref{Sec:numerical Example}, with $N=5$ identical transmitters. 
Since $N$ is odd here, we modify Algorithm $1$ for the even $N$ case to apply it for our considered system setup with $N=5$  transmitters.   We set  $\epsilon=10^{-3}$, $\delta=d/100$, $itr_{\max}=  100$, and $rpt_{\max}=100$.   

First, Fig. \ref{fig:OptimalLocaion} shows the optimized (transmitter) locations (OL), i.e.,  $x_n^\star$'s given by Algorithm $1$, versus the uniform (transmitter) locations (UL). 
It is observed that for OL, except the transmitter that is located below the center of the target line ($x=0$), the other four transmitters all move closer to  the center compared to UL.
\begin{figure} [t!]
	\centering
	\includegraphics[scale=0.52]{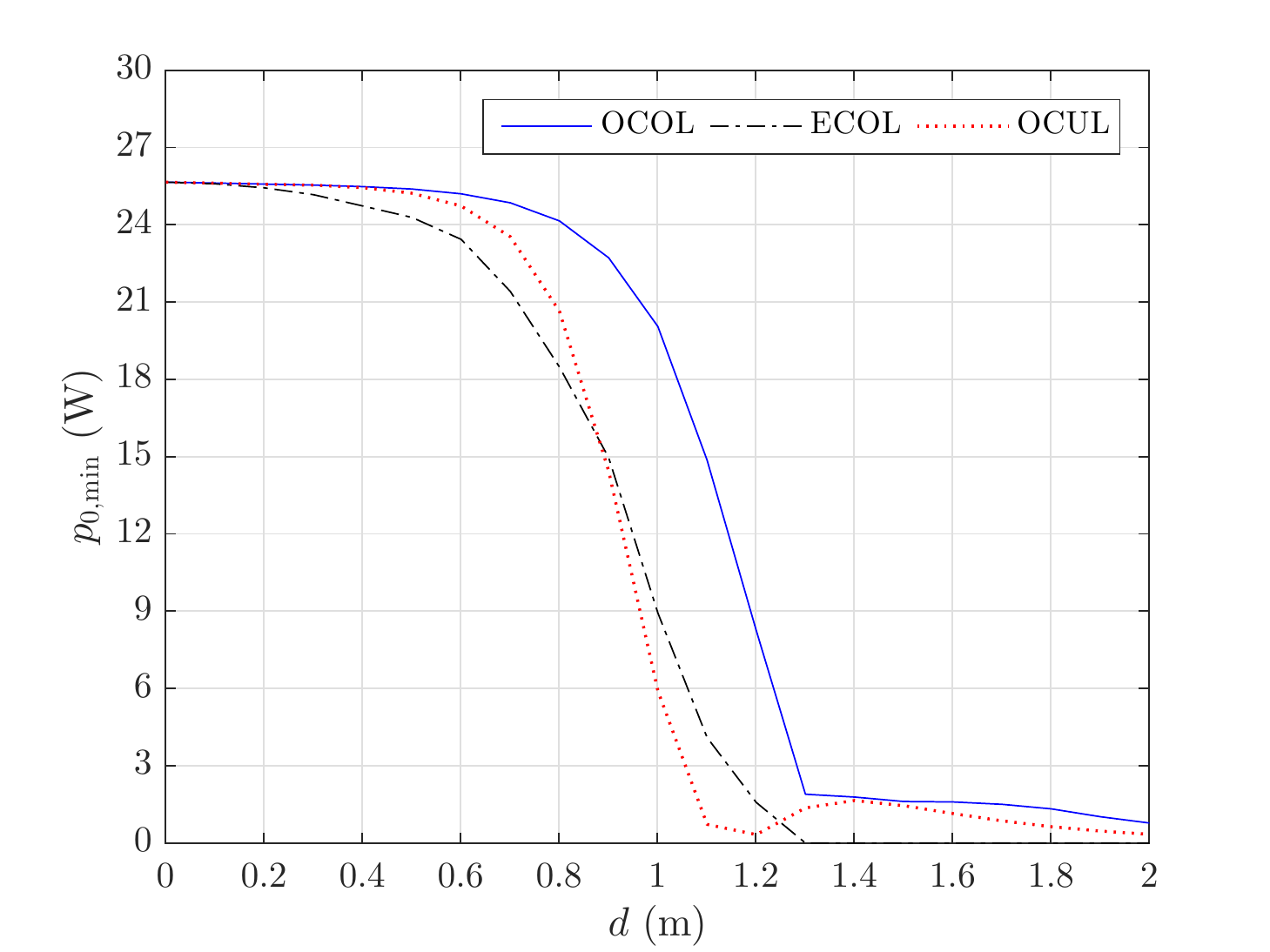}
	\vspace{-2mm}
	\caption{The minimum load power versus the target line length.} 
	\label{fig:p0min_d} \vspace{-1.5mm}
\end{figure}  
\begin{figure} [t!]
	\centering
	\includegraphics[scale=0.52]{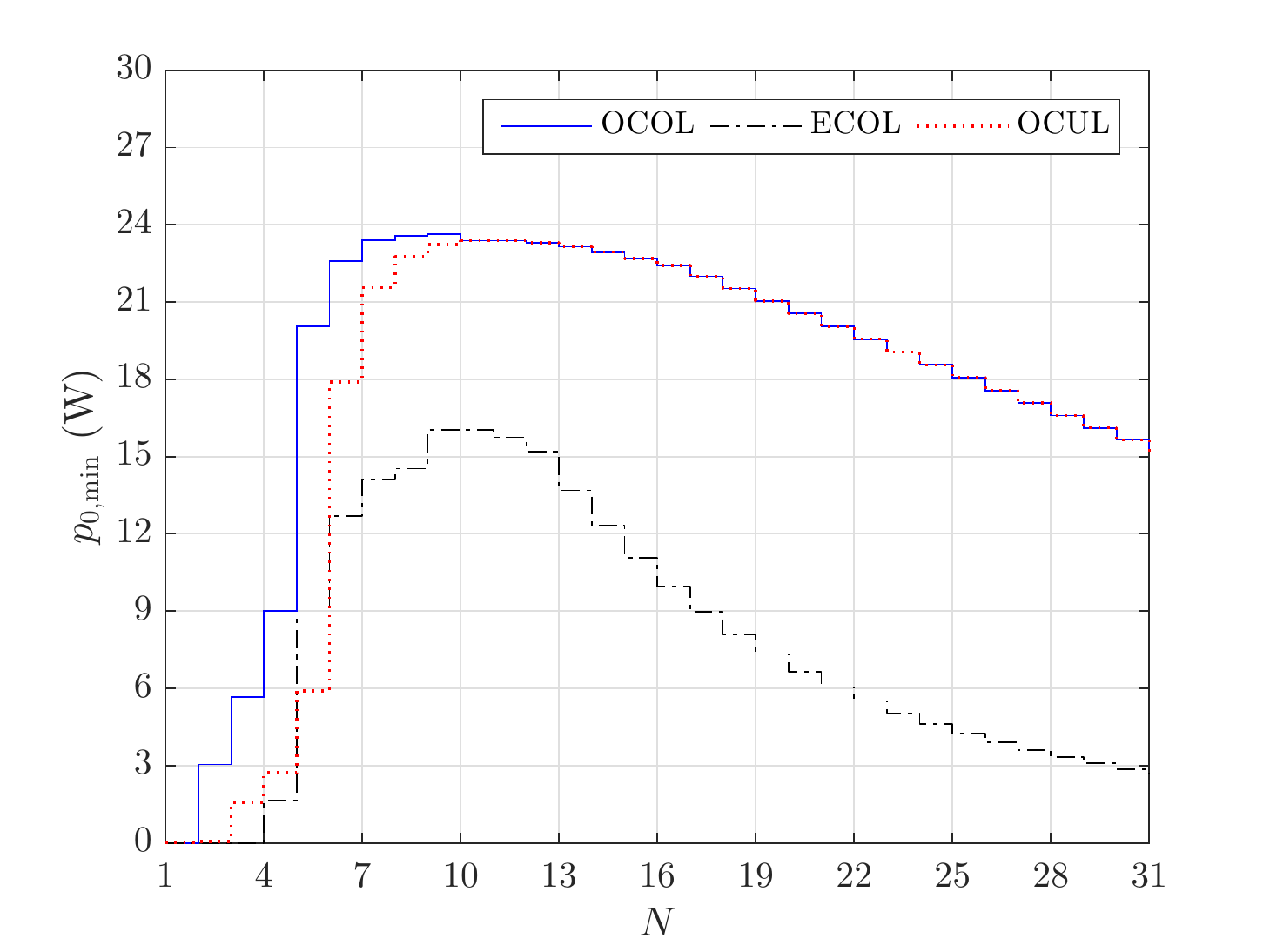}\vspace{-2mm}
	\caption{The minimum load power versus the number of transmitters with a given total coil size.} \label{fig:p0min_N} \vspace{-5mm}
\end{figure} 

Next, Fig. \ref{fig:NodePalacement_3tx}  compares  the deliverable power to the load, $p_0$ given in (\ref{eq:Fun_p_0}),  versus the receiver location  $x_0$, under three  schemes:  optimal (transmitter) current with optimized  (transmitter) location (OCOL), equal (transmitter) current with optimized (transmitter) location (ECOL), and optimal (transmitter) current with uniform (transmitter) location (OCUL).
It is observed that OCOL with both optimized transmitter locations and optimal magnetic beamforming improves the minimum deliverable power significantly over  the other two schemes with only optimized transmitter locations or optimal magnetic beamforming.   
In fact, OCOL  achieves the best performance in terms of all metrics,  where the details are given in Table \ref{tab:quantitiveComparision2}.
 
Besides, Fig. \ref{fig:p0min_d} plots the minimum deliverable power $p_{0,\min}$ given in (\ref{eq:p0min}) versus the target line length $d$, under the three schemes. 
It is observed that OCOL consistently achieves better performance  than the other two schemes, although the gain decreases when $d$ is small or large. 
This can be explained as follows. 
When $d$ is small, the mutual inductance between the receiver and different transmitters is less sensitive to  their locations, which implies that  the gain of transmitter placement optimization is small. 
In this case, from (\ref{eq:u_n}), it  follows that the transmitter currents tend to be all equal, hence the magnetic beamforming gain over the equal current allocation is also negligible.  
Similarly, when $d$ is large, the  distance between transmitters is large for both  UL and OL designs, since there are only five transmitters available to cover the target line. 
In this case, the magnetic coupling between the transmitters is small, hence they can be treated as independent transmitters. 
As shown in Fig. \ref{fig:p0_1tx_x0}, using a single transmitter for WPT cannot provide any magnetic beamforming gain. As a result, both transmitter location and current optimization do not yield notable performance gains.  
 
Last, we consider the  practical problem of finding  the optimal number of transmitters, $N$, to cover a given target line most efficiently.  
In this example, it is assumed that the total length  of coil wires for manufacturing all $N$ transmitters is fixed as $200\pi$ in meter, and thus the radius of each individual transmitter coil shrinks as $N$  increases. 
Specifically, we set the transmitter coil radius as $e_{\text{coil},\text{tx}}=250/N$ in millimeter and keep the number of the turns fixed as $b_{\text{tx}}=400$ regardless of $N$. 
The other parameters of the coils  are assumed to be the same as in Section \ref{Sec:numerical Example}. 
Fig. \ref{fig:p0min_N} plots the minimum  load power  $p_{0,\min}$ over the number of transmitters  $N$, under the aforementioned schemes of OCOL, ECOL, and OCUL. 
It is observed that for all three schemes, $p_{0,\min}$ first increases and then decreases with $N$. 
This implies that using either a small number of transmitters each with larger coil or a large number of transmitters each with smaller coil is both inefficient in maximizing the minimum deliverable power.  
Note that for the case of $N=1$, i.e., centralized WPT, $p_{0,\min}=0$, which is in accordance with the result in Fig. \ref{fig:p0_1tx_x0}. 
%*************************
%*************************
%*************************
%*************************
\section{Node Placement  Optimization in 2D}  \label{Sec:Node Placement Optimization 2D} 
\begin{figure} [t!] 
	\centering
	\includegraphics[scale=0.655]{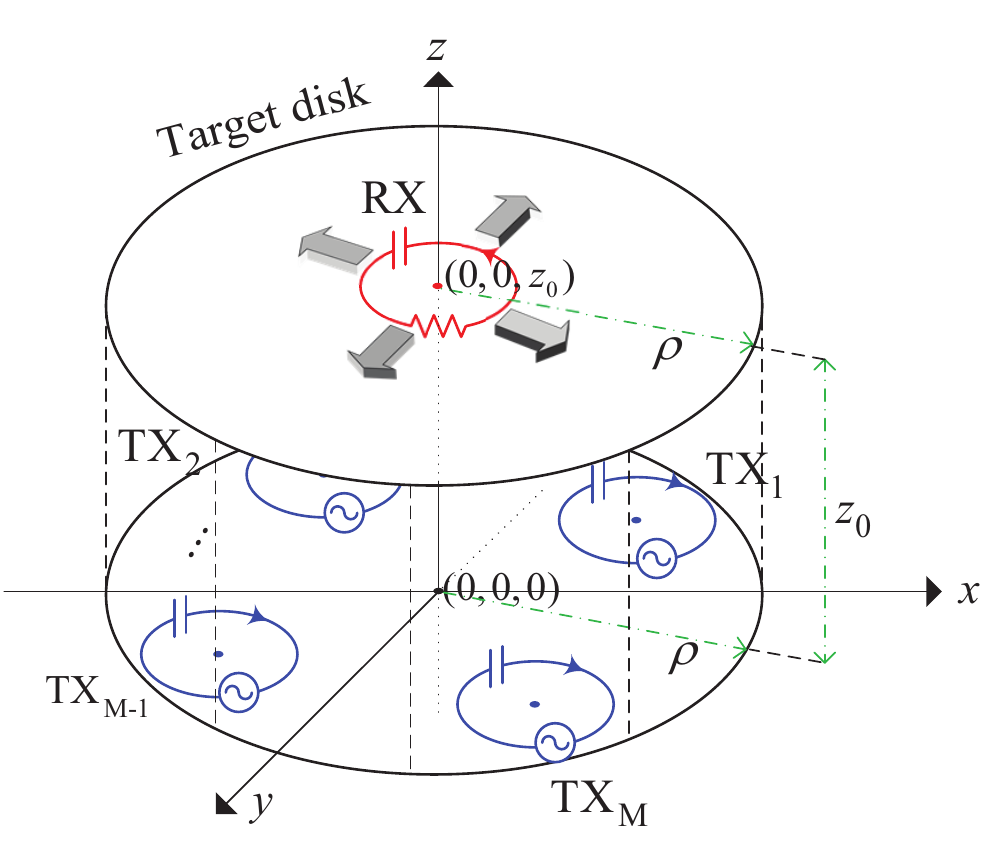} 
	\caption{The considered 2D system setup.} \label{fig:RoundTable}  \vspace{-2mm}
\end{figure} 
In this section, we extend the node placement optimization to the 2D  region case. 
As shown in Fig. \ref{fig:RoundTable}, we assume that the receiver can move horizontally within a disk of radius $\rho>0$ that  lies in the $(x,y)$ plane at a fixed height of $z=z_0$ with its  center at the origin $(x=0,~y=0,~ z=z_0)$. We denote this disk region as the 2D {\it target disk}. 
On the other hand, it is assumed that  transmitters are all horizontally placed in a disk in parallel to and below the target disk, which has the same radius of $\rho$, a fixed height of $z=0$, and  its center at the origin.   
Let $(x_n,y_n)$, with $\sqrt{x_n^2+y_n^2}\le \rho$, ($(x_0,y_0)$, with $\sqrt{x_0^2+y_0^2}\le \rho$) denote the ($x,y$)-coordinates of  transmitter $n$ (receiver).  
In this case, the mutual inductance expressions given in  (\ref{eq:h_nk}) and (\ref{eq:h_n0}) as well as  the approximation given in (\ref{eq:hn0-simplified}) can be modified by setting $d_{nk}=\sqrt{(x_n-x_k)^2+(y_n-y_k)^2}$ and $d_{n0}=\sqrt{(x_n-x_0)^2+(y_n-y_0)^2}$.
Accordingly, the transmitters' sum power and the deliverable power to the receiver load given in (\ref{eq:p_sum}) and (\ref{eq:p_0}) can be re-expressed as functions of $(x_0,y_0)$, $(x_n,y_n)$'s, and $\overline{i}_n$'s as $p_0((x_0,y_0),\{(x_n,y_n)\},\{\overline{i}_n\})$ and $p_n((x_0,y_0),\{(x_n,y_n)\},\{\overline{i}_n\})$, respectively.  
Define ${\cal R}= \{(x,y)~|~ \sqrt{x^2+y^2}\le \rho\}$, which is a convex set over $x$ and $y$. 
In general, $\cal R$ represents a generic 2D disk with a  radius of $\rho$ that lies in the $(x,y)$ plane with an arbitrary fixed height of $z=\dot{z}$ and its center  at the origin.  
In the rest of this section, when we refer to $\cal R$ as the target disk, we implicitly assume that the height is set as $\dot{z}=z_0$; otherwise, the height is $\dot{z}=0$ and $\cal R$ is used to refer to the disk region where  the transmitters are all located.  
The four performance metrics introduced for the 1D case, i.e., the average value, the minimum value, the maximum value, and the min-max ratio of the load power given in  (\ref{eq:p0ave})--(\ref{eq:min-max-ratio}), can be similarly  re-defined for the 2D case.  Specifically, each metric is a function  of $(x_n,y_n)$'s and $\bar{i}_n$'s in the 2D case.  For brevity,  the details are omitted. 
With the optimal transmitter currents given in (\ref{eq:u_n}) for magnetic beamforming, the deliverable power to the load in (\ref{eq:p_0^*}) can be then rewritten as $p_0^{\star}((x_0,y_0),\{(x_n,y_n)\})$. 

Last, note that a practical example of our considered 2D setup could be a round non-metallic  table with built-in wireless chargers mounted below its surface where the receiver can be freely placed on the table for wireless charging. In this case, $2\rho$ denotes the diameter of the table, and $z_0$ is the thickness of its surface.  
%*************************
\vspace{-2mm}
\subsection{Problem Formulation and Solution} \label{Sec:2D_ProbForm}
Similar to (P2) for the 1D case, we formulate  the node placement problem to maximize the minimum deliverable power to the receiver over the target disk $\mathcal{R}$ in 2D as
\begin{align} 
\hspace{-3mm} \mathrm{(P3)}:  \hspace{-5mm} \mathop{\mathtt{max}}_{\tau,~\{(x_n,y_n)\}}~\hspace{-3mm}& \tau \label{eq:p3-objective}
\\
\mathtt{s.t.}~&  p_0^\star((x_0,y_0),\{(x_n,y_n)\}) \ge \tau, (x_0,y_0)  \in {\cal R},\hspace{-2mm} \label{eq:p3_c1} \\
~& (x_n,y_n) \in {\cal R},~n=1,\ldots,N.  
\end{align}  
Similar to the 1D case, it can be verified that the optimal transmitter locations in (P3) must be {\it rotationally symmetric} over $\mathcal{R}$. 
In general, as shown in Fig. \ref{fig:Symmetry-Proof},  a rotationally symmetric structure for the transmitters' locations in $\cal R$ needs  to place them either at the origin and/or over one or more concentric rings, where each ring has the same center at the origin, an arbitrary radius that is less than or equal to $\rho$, consists of at least two transmitters that are equally  spaced over the ring, and has an arbitrary rotation angle with respect to the  $x$-axis.    
For $N=1$, only one rotationally symmetric structure exists by placing the single transmitter at the origin.  
For $N\ge 2$, in the following we first present a sufficient and necessary   condition to ensure that a structure consisting of $Q$ transmitter rings,  with $1\le Q \le \lfloor N/2 \rfloor$ and $ \lfloor \cdot \rfloor$ denoting the largest integer that is no greater than a given real number,  is rotationally symmetric over $\cal R$. 
Based on this condition, we then specify the total number of {\it distinct}   rotationally symmetric structures that exist for a given $N$,   denoted by $S_N\ge1$.     
For each ring $q$, $q\in \{1,\ldots,Q\}$, we denote $N_q\ge 2$ as the number of its located  transmitters, $\rho_q$, with $0 < \rho_q \le \rho$,  as its radius, and $ \phi_q$, with $0\le \phi_q\le 2\pi/N_q$, as its rotation from the $x$-axis.  
By default, we have  $\sum_{q=1}^{Q} N_q \le N$, where  $N-\sum_{q=1}^{Q}N_q$ remaining  transmitters (if any)  are all placed at $(x=0,y=0,z=0)$. 
Without loss of generality, we also set $\phi_1=0$. 
\begin{figure} [t!]
	\centering
	\includegraphics[scale=0.405]{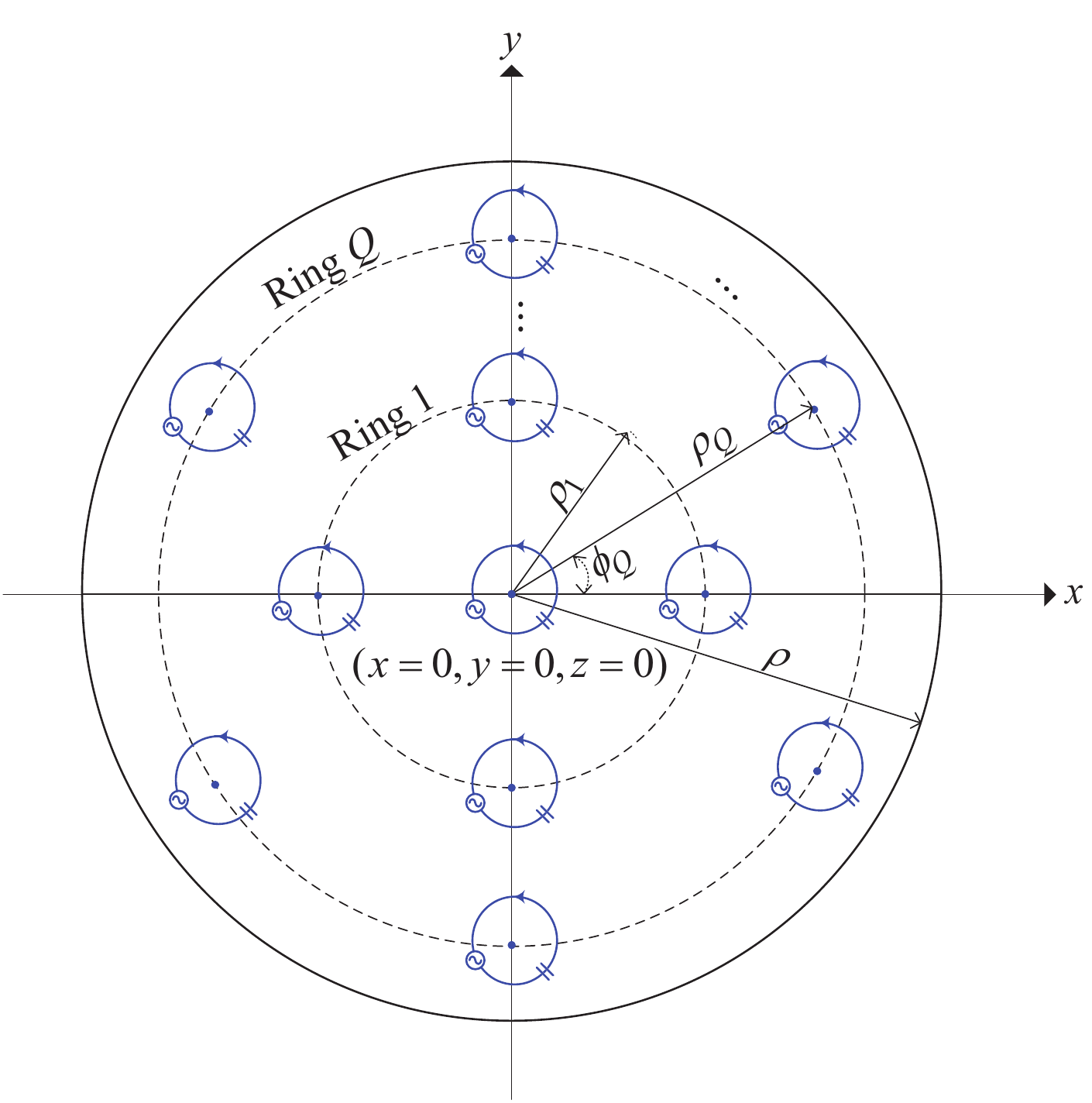}\vspace{-2mm}
	\caption{A structure consisting of $Q$ transmitter  rings.}\label{fig:Symmetry-Proof}
	\vspace{-2mm}
\end{figure}
Then, we have the following lemma. 
\begin{lemma} \label{Lemm:SymmCond}
A structure with $Q \ge 1$ transmitter rings is rotationally symmetric over $\cal R$ if and only if ({\it iff}) there exists a common divider  $u \ge 2$ such that  $N_q \mod u=0$, $\forall q=1,\ldots,Q$.
\end{lemma}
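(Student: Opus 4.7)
The plan is to view rotational symmetry of the transmitter structure as the existence of a rotation about the origin, by some angle $\theta\in(0,2\pi)$, that permutes the set of transmitter positions. The origin is fixed by every rotation, so any transmitters placed at the center impose no constraint. Since rotations preserve distance to the origin and the rings are distinguished by their radii $\rho_q$, any such $\theta$ must permute the transmitters of each ring $q$ among themselves. The problem therefore reduces to a purely angular one: for which $\theta$ is every ring individually invariant?

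For the ``if'' direction, I would fix a common divisor $u\ge 2$ of $N_1,\ldots,N_Q$ and verify that $\theta=2\pi/u$ works. A generic transmitter on ring $q$ sits at angular position $\phi_q+2\pi k/N_q$, and rotating by $2\pi/u$ sends it to $\phi_q+2\pi(k+N_q/u)/N_q$; since $N_q/u$ is an integer, this is again one of the $N_q$ transmitters on ring $q$. Hence the whole structure is preserved, and the per-ring offsets $\phi_q$ play no role because they are shared by all transmitters on a given ring.

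For the ``only if'' direction, I would start from an arbitrary nontrivial symmetry angle $\theta\in(0,2\pi)$. Invariance of each ring forces $\theta\equiv 2\pi m_q/N_q\pmod{2\pi}$ for some integer $m_q$ with $1\le m_q\le N_q-1$, so $\theta/(2\pi)$ is rational. Writing $\theta/(2\pi)=a/b$ in lowest terms with $\gcd(a,b)=1$ and $1\le a<b$, the equality $a/b=m_q/N_q$ together with coprimality of $a$ and $b$ forces $b\mid N_q$ for every $q$. Setting $u=b\ge 2$ yields the required common divisor.

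I do not expect a substantive obstacle: once the geometric reduction to ``each ring must be invariant'' is in place, the rest is a short gcd argument. The one subtlety I want to record carefully is why the per-ring offsets $\phi_q$ can be discarded in both directions, namely that $\phi_q$ cancels out of the invariance condition for the angular lattice $\{\phi_q+2\pi k/N_q\}_{k=0}^{N_q-1}$ under a global rotation, so the analysis depends only on the ring cardinalities $N_q$ and not on how the rings are rotated relative to one another.
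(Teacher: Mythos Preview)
Your proposal is correct and follows essentially the same approach as the paper's proof: both reduce rotational symmetry of the full structure to simultaneous invariance of each ring under a single rotation angle, and then convert that into a common-divisor condition on the $N_q$'s. Your ``only if'' direction, via writing $\theta/(2\pi)=a/b$ in lowest terms and invoking $\gcd(a,b)=1$ to force $b\mid N_q$, is a cleaner and more explicit version of what the paper sketches, but the underlying strategy is identical.
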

\begin{proof}
Please see Appendix C.
\end{proof}
With Lemma \ref{Lemm:SymmCond}, the following proposition thus follows. 
\begin{proposition} \label{proposition:QN}
For $N\ge 2$, we have $S_N=|\mathbb{P}_N|$, where $\mathbb{P}_N$ is the set consisting of all prime numbers less than or equal to $N$ and $|\cdot|$ denotes the cardinality of a set.
\end{proposition}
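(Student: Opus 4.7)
The plan is to translate the counting of distinct rotationally symmetric structures into counting the admissible prime common divisors of the ring populations $\{N_1,\ldots,N_Q\}$, and then to show that exactly the primes at most $N$ are admissible.

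By Lemma~\ref{Lemm:SymmCond}, rotational symmetry is equivalent to the existence of an integer $u\ge 2$ dividing every $N_q$. Since any such $u$ has at least one prime factor $p$, and $p\mid u\mid N_q$ forces $p$ to itself be a common divisor, the condition reduces to the existence of a prime common divisor. Thus every rotationally symmetric structure is naturally associated with at least one prime, and one should expect $S_N$ to equal the number of primes that can so arise.

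First, I would show that every prime $p\le N$ is admissible through the explicit construction $Q=1$ with $N_1=p$ on a single ring and the remaining $N-p\ge 0$ transmitters placed at the origin; since $N_1=p\ge 2$ satisfies the ring-size requirement and $p$ divides $N_1$, Lemma~\ref{Lemm:SymmCond} certifies that the configuration is rotationally symmetric. Conversely, if $p$ is a prime common divisor of the $\{N_q\}$ of any valid rotationally symmetric structure, then $p\le N_q\le\sum_{q'=1}^{Q}N_{q'}\le N$, so $p\in\mathbb{P}_N$. These two containments together identify the set of admissible primes precisely with $\mathbb{P}_N$.

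The main obstacle will be justifying that the enumeration groups structures by admissible prime rather than by the finer combinatorial data of the $\{N_q\}$, of which there are generally many more possibilities. I would argue that every common divisor $u\ge 2$ factors into primes each of which is itself a common divisor (by the reduction in the second paragraph), so every rotationally symmetric structure already belongs to at least one prime class and the composite values of $u$ yield no new rotational symmetry beyond what their prime factors already capture. Consequently distinct rotational-symmetry types correspond bijectively to distinct admissible primes, and the count collapses to $S_N=|\mathbb{P}_N|$.
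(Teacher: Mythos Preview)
Your reduction via Lemma~\ref{Lemm:SymmCond} to a prime common divisor is correct, and your two containments do establish that the set of primes that can serve as a common divisor of some admissible $\{N_q\}$ is exactly $\mathbb{P}_N$. However, the final sentence---``distinct rotational-symmetry types correspond bijectively to distinct admissible primes''---is asserted rather than proved, and this is where the argument breaks down.

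The difficulty is that $S_N$ does not count admissible primes; it counts \emph{distinct} structures, where ``distinct'' in the paper means a structure that is not a special case of a more general one (equivalently, one whose rings cannot be further split while preserving rotational symmetry). A non-distinct rotationally symmetric structure can sit in several of your ``prime classes'' simultaneously: if every $N_q=6$, both $p=2$ and $p=3$ are common divisors, so your classification does not assign it a unique prime. Conversely, your single-ring construction with $N_1=p$ produces a rotationally symmetric configuration, but you never argue that it is distinct, nor that for a fixed $p$ there is only \emph{one} distinct structure rather than many (one for each $Q$, say). So neither direction of the claimed bijection is actually established.

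The paper closes this gap by proving a sharp characterization: a structure is rotationally symmetric \emph{and} distinct if and only if $N_1=\cdots=N_Q=u$ for some prime $u\in\mathbb{P}_N$. The ``only if'' direction is the part you are missing: given distinctness, write $N_q=a_q u$ for the common divisor $u$; if any $a_q>1$ the ring can be split into $a_q$ sub-rings of $u$ transmitters each, contradicting distinctness, so $a_q=1$ for all $q$; and if $u$ itself were composite the same splitting argument applies again, forcing $u$ prime. Once this characterization is in hand, each prime $u\le N$ determines exactly one distinct structure (all rings of size $u$, with $Q=\lfloor N/u\rfloor$ rings and the remainder at the origin), and the bijection---hence $S_N=|\mathbb{P}_N|$---follows immediately. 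Your outline needs this irreducibility argument to be complete.
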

\begin{proof}
Please see Appendix D.
\end{proof}

\begin{figure*} [t!] 
	\begin{center}
		\subfigure[Structure $1$]
		{\scalebox{0.405}{\includegraphics{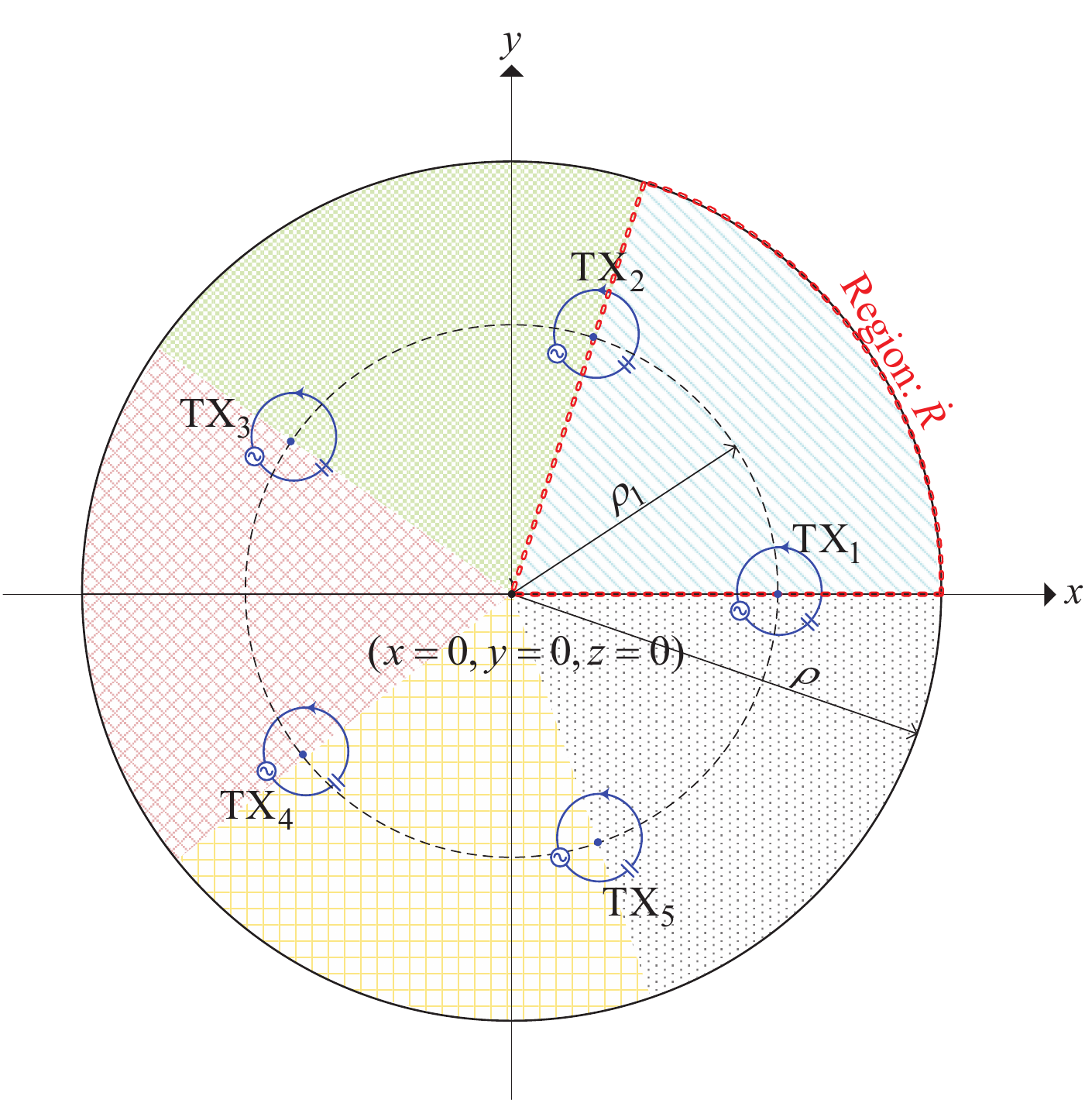}}}
		\subfigure[Structure $2$] 
		{\scalebox{0.405}{\includegraphics{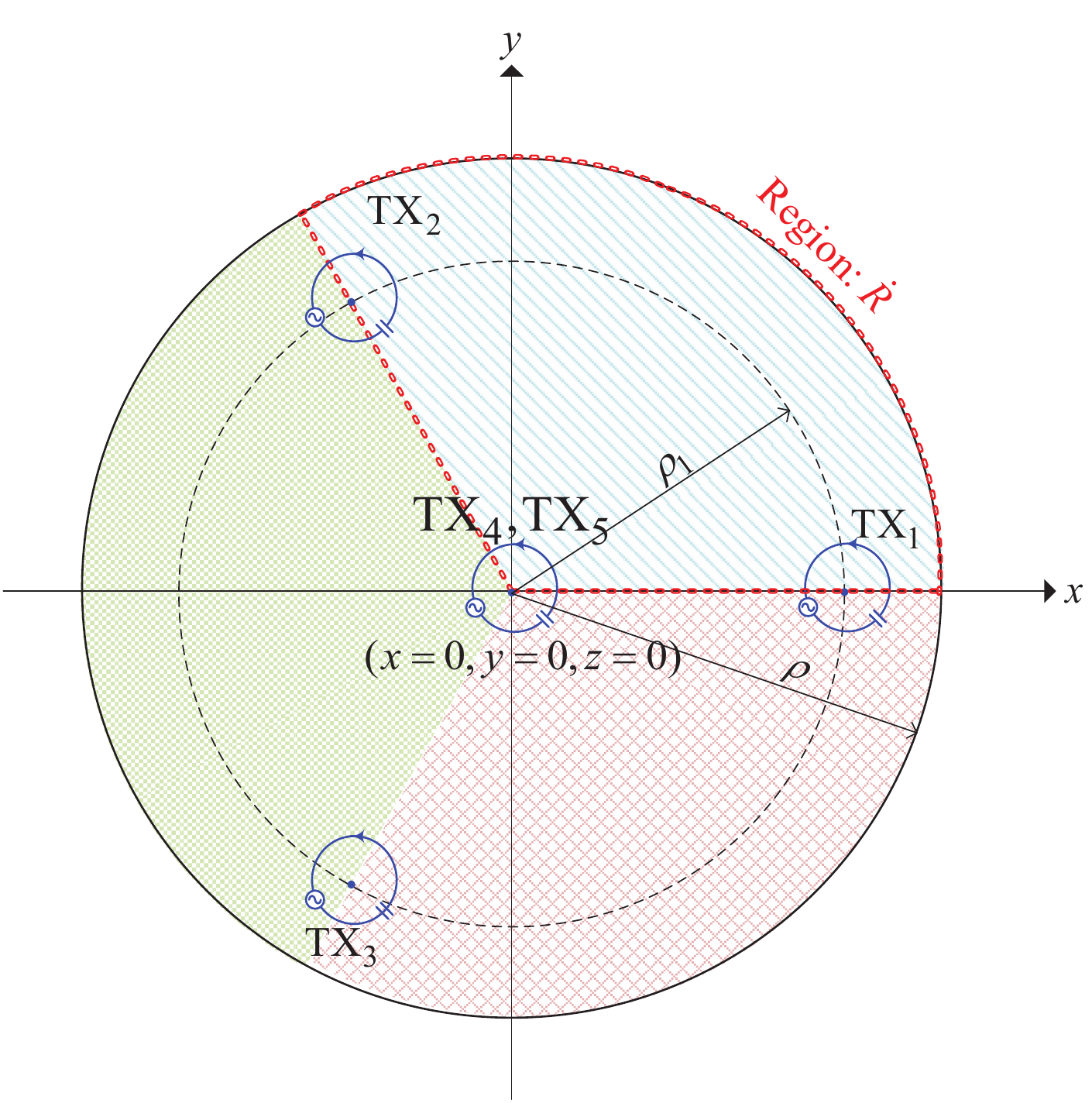}}}
		\subfigure[Structure $3$]
		{\scalebox{0.405}{\includegraphics{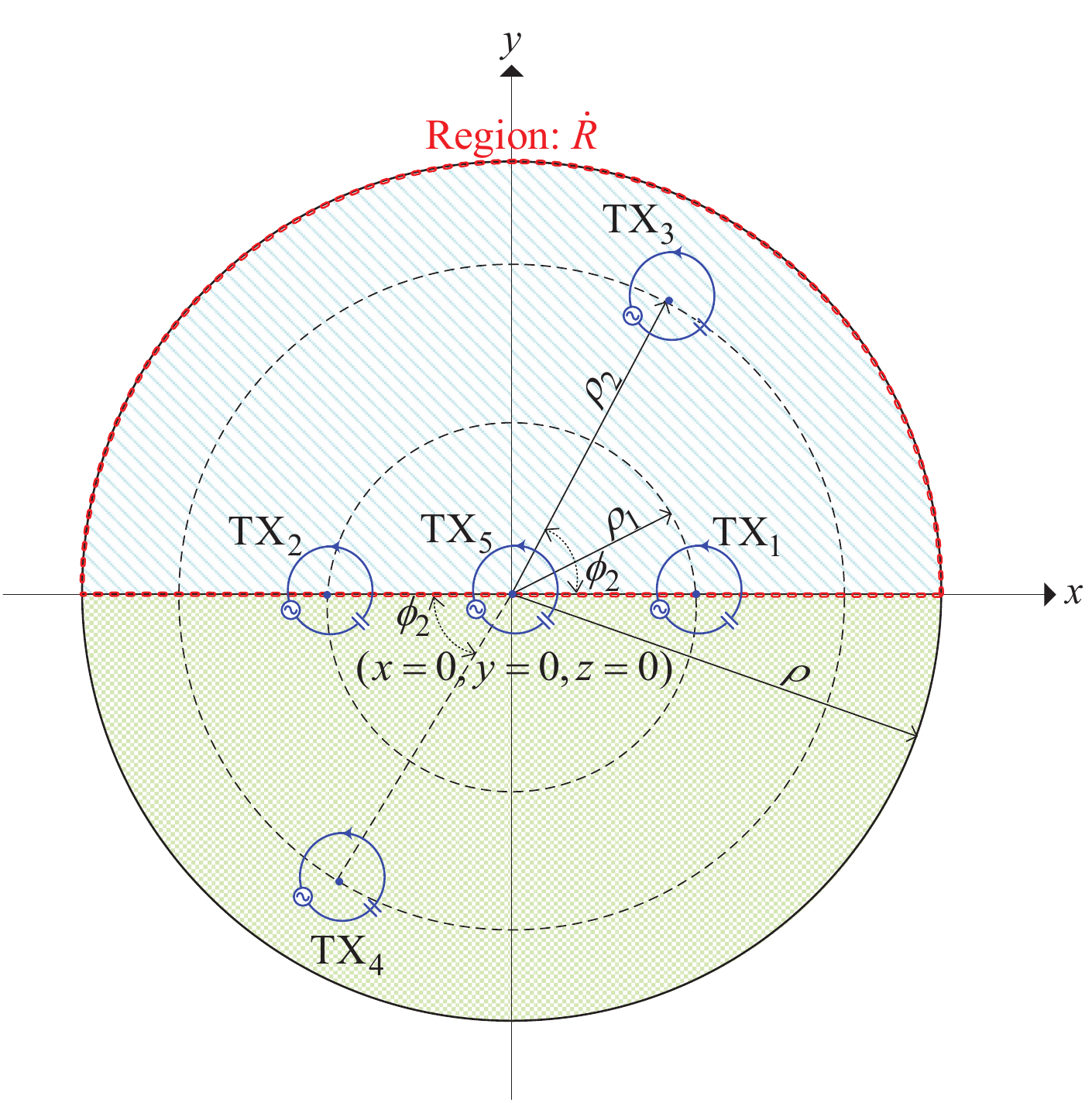}}}
	\end{center} \vspace{-1.5mm}
	\caption{Rotationally symmetric structures for a system of $N=5$   transmitters.} \vspace{-1.5mm} \label{fig:SymStructure}  
\end{figure*}   
Notice that $|\mathbb{P}_N|<N$, for $N\geq 2$, and hence the total number of rotationally symmetric structures for each given $N$ is less than $N$.  
For example, when $N=5$, from Proposition \ref{proposition:QN} it follows that $\mathbb{P}_N=\{2, 3, 5\}$, thus $S_N=3$ and  in total only three distinct  rotationally symmetric structures exist, as shown in Figs. \ref{fig:SymStructure}(a)--(c), respectively.
Moreover, different from the magnetic beamforming optimization which needs to be computed in real time according to the receiver's location, the node placement optimization can be solved offline before the transmitters are initially deployed. 
Thus, the complexity of optimizing over  $S_N$  structures to achieve the optimal transmitter placement is practically  affordable for a given $N$.  
Last, as $N$ increases, based on the so-called prime number theorem \cite{Apostol-Book}, we have asymptotically $|\mathbb{P}_N|\approx N/\ln(N)$. 
 
Next, for each rotationally symmetric  structure $s$, $s=1,\ldots,S_N$, derived from Proposition \ref{proposition:QN}, we first simplify (P3) by exploiting  the symmetry in the structure, and then solve it using a similar algorithm like Algorithm $1$ for the 1D case. 
Let  $\{(x_{n,s}^\star,y_{n,s}^\star)\}$ and $\tau_{s}^\star$ denote the optimized    transmitter locations and the resulting minimum load power for structure $s$, respectively. 
The optimal solution to (P3) is thus given by $\{(x_{n,\dot{s}}^\star,y_{n,\dot{s}}^\star)\}$, where $\dot{s}=\arg \max_{s\in \{1,\ldots,S_{N}\}} \tau_{s}^\star$.
Note that the optimal structures for different $N$ are in general not  identical. Even for a fixed $N$, the optimal structure may vary depending on the system parameters (e.g., $\rho$ as shown later in Table \ref{tab:quantitiveComparision4}).  
 
Now, we illustrate the above procedure for the case of $N=5$ transmitters, while the approach is general and can be applied to the cases with other $N$ values.
For Structure $1$ shown in Fig. \ref{fig:SymStructure}(a), (P3) is simplified as \vspace{-1mm}
\begin{align} 
&\mathrm{(P3-5TX-S1)}:  \mathop{\mathtt{max}}_{\tau,~\rho_1}~\tau \label{eq:p35TXS1-objective}
\\
\mathtt{s.t.} ~&  \sum_{n=1}^{N} \dot{f}_{z_0,\theta_n}(\rho_1,(x_0,y_0)) \ge g(\tau),~ (x_0,y_0)\in {\cal \dot{R}}, \label{eq:p35TXS1_c1} \\
~& 0\le \rho_1 \le \rho,
\end{align}
where \vspace{-1mm}  
\begin{align}
&\dot{f}_{z_0,\theta_n}(\rho_1,(x_0,y_0))=\nonumber \\ &\dfrac{\bigg(2z_0^2-\left(\rho_1 \cos(\theta_n)-x_0\right)^2-\left(\rho_1 \sin(\theta_n)-y_0\right)^2\bigg)^2}{\bigg(z_0^2+\left(\rho_1 \cos(\theta_n)-x_0\right)^2+\left(\rho_1 \sin(\theta_n)-y_0\right)^2\bigg)^5}. 
\end{align}
Moreover, we have $\theta_n=2\pi (n-1)/5$, $n=1,\ldots,5$, and ${\cal \dot{R}}=\{(x,y)~|~ \sqrt{x^2+y^2}\le \rho,~ 0\le \cos^{-1}(x/\sqrt{x^2+y^2}) \le 2\pi/5\}$, with ${\cal \dot{R}}\subset {\cal R}$ (the regions of  $\dot{\cal R}$ for Structures $2$ and $3$ are shown in Figs. \ref{fig:SymStructure}(b) and \ref{fig:SymStructure}(c), respectively).  
In (P3$-$5TX$-$S1), $\rho_1$, with $0\le \rho_1 \le \rho$, is the single decision variable (with $\tau$ as an auxiliary variable), hence Algorithm $1$ can be easily modified to solve this problem. 
Let $\rho_1^\star$ denote   the obtained solution to (P3$-$5TX$-$S1). 
Accordingly, we set $\{(x_{n,1}^\star,y_{n,1}^\star)=(\rho_1^\star \cos(\theta_n),\rho_1^\star \sin(\theta_n))\}$, $n=1,\ldots,5$, for Structure $1$. 
Similarly, we can simplify (P3) for Structure $2$ shown in Fig. \ref{fig:SymStructure}(b), for which two transmitters are placed at the origin.\footnote{In practice, the transmitter deployment shown in Fig. \ref{fig:SymStructure}(b) can be implemented by replacing transmitters $4$ and $5$ (which are co-located at the origin) by an aggregate transmitter with the same coil radius $e_{\textnormal{coil,tx}}$, but $2b_\textnormal{tx}$ turns of wire. 
Specifically, from (\ref{eq:u_n}), it follows  that the optimal currents allocated to transmitters $4$ and $5$, i.e., $i_4^{\star}$ and $i_5^{\star}$, respectively, are always identical, since $h_{40}=h_{50}$ for any receiver location.  
Hence, the two transmitters can be aggregated to a single transmitter with the aforementioned specification, without change of performance.} 
For Structure $3$ shown in Fig. \ref{fig:SymStructure}(c), we need to jointly optimize three decision variables  $\rho_1$, $\rho_2$, and $\phi_2$, with $0\le \rho_1,\rho_2 \le \rho$ and $0\le \phi_2 \le \pi$ ($\phi_1=0$ by default). The details are omitted for brevity.  
Last, note that if Structures $1$--$3$ are rotated  around the origin, their optimal solutions remain unchanged, as explained in the following. 
For example, let us rotate Structure $1$ shown in Fig. \ref{fig:SymStructure}(a) around the origin by setting $\phi_1 = \Delta \phi$, with $0<\Delta \phi< 2 \pi$ in rad.  
If $\Delta \phi=2m\pi/5$, with $m=1,\ldots,4$, then it follows that the load power distribution of the rotated Structure $1$ over the  target region $\cal R$ is the same as that for the reference  Structure $1$ with $\phi_1=0$, and hence $\rho_1^{\star}$  is the optimal solution to the rotated version of Structure $1$ as well.    
Otherwise, if $\Delta \phi \neq \{2m\pi/5~|~ m=1,\ldots, 4\}$, then the load power distribution of the rotated Structure $1$ can be simply obtained by rotating the load power distribution of the reference  Structure $1$ around the origin by $\Delta \phi$  radians.  
Obviously, the   minimum,  maximum, and average values for the deliverable load power all remain invariant. 
Hence, $\rho_1^{\star}$  is still the optimal  solution to the rotated version of Structure $1$. The similar argument is valid for Structures $2$ and $3$. As a result, in the rest of this paper we do not consider the rotated versions of Structures $1$-$3$ in our analysis/simulations. 
\begin{figure*} [t!]
	\begin{center}
		\hspace{-3mm}
		\subfigure[Distributed WPT (from left to right: Structures 1, 2, and 3)]
		{\scalebox{0.34}{\includegraphics{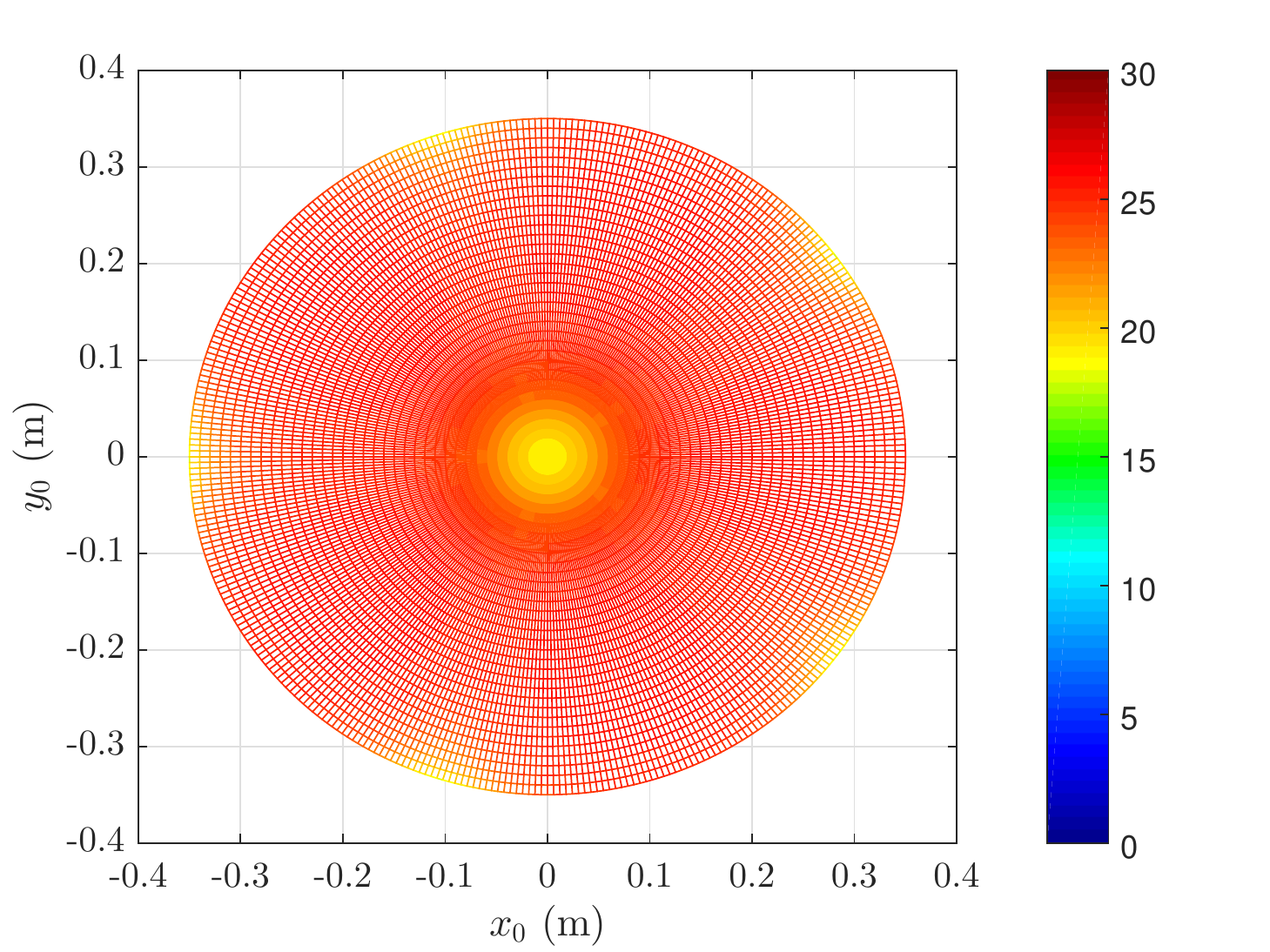}} \hspace{-13mm}
		\scalebox{0.34}{\includegraphics{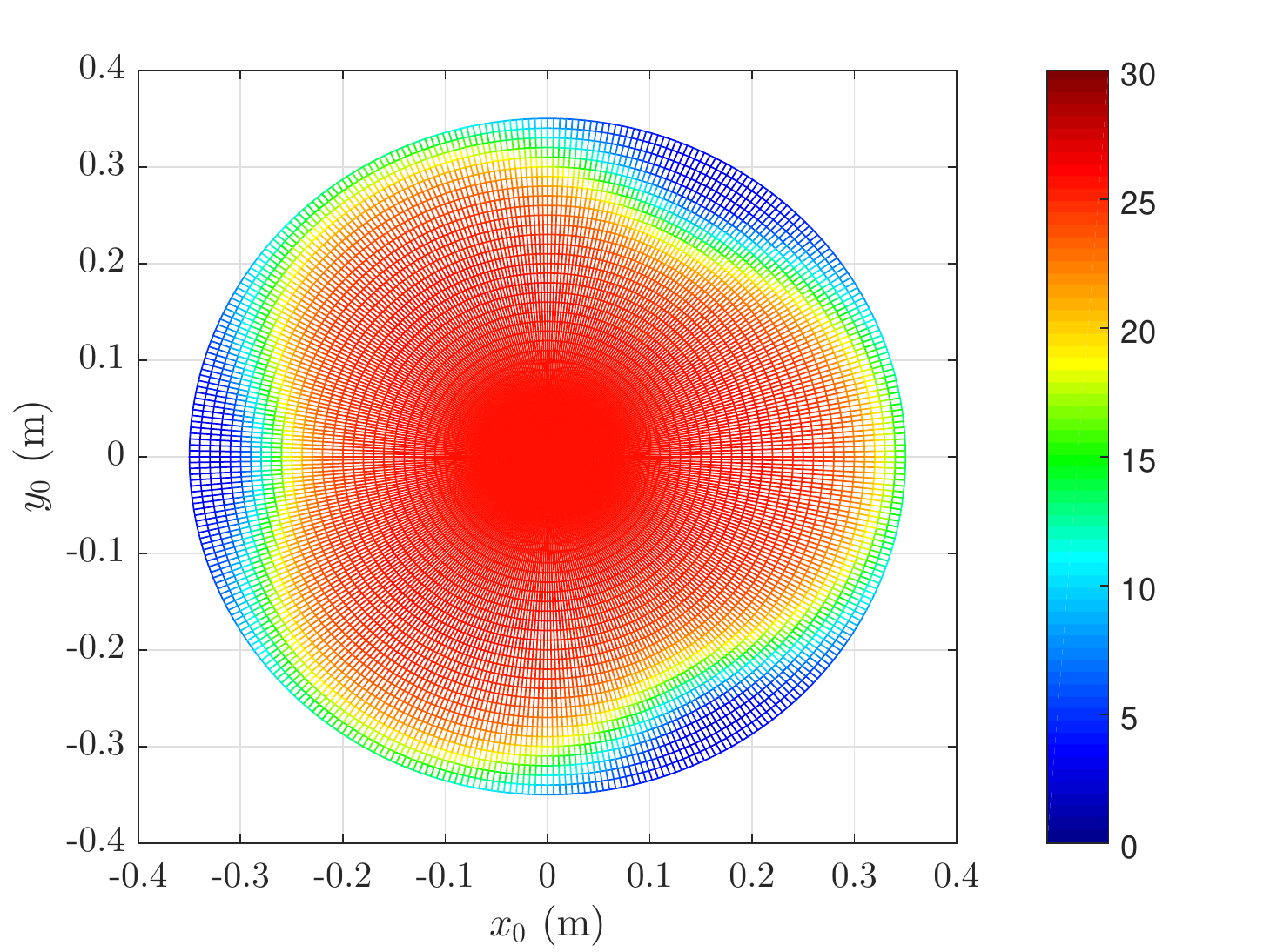}}  \hspace{-13mm}
		\scalebox{0.34}{\includegraphics{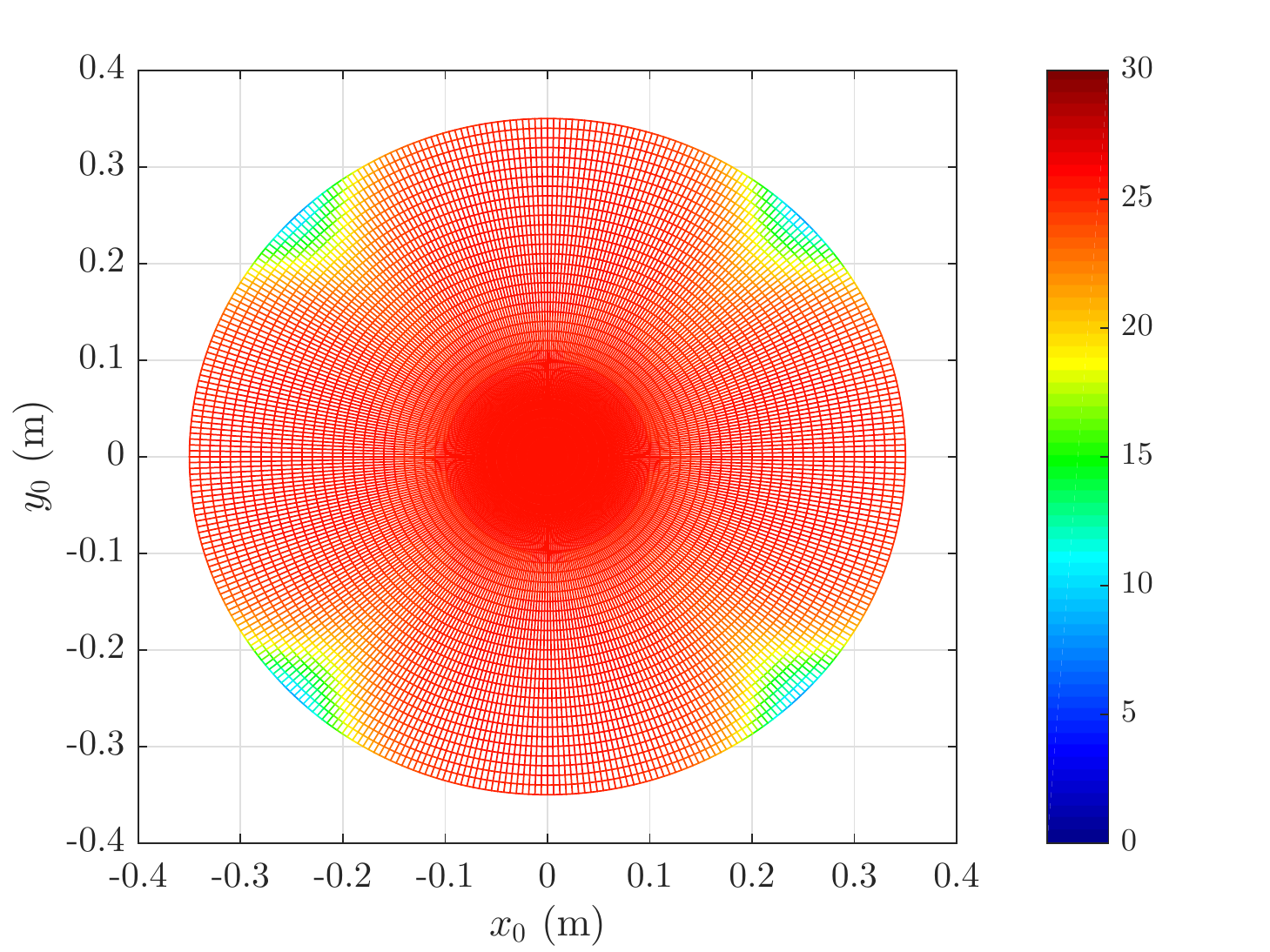}}}
		\subfigure[Centralized WPT] 
		{\scalebox{0.34}{\includegraphics{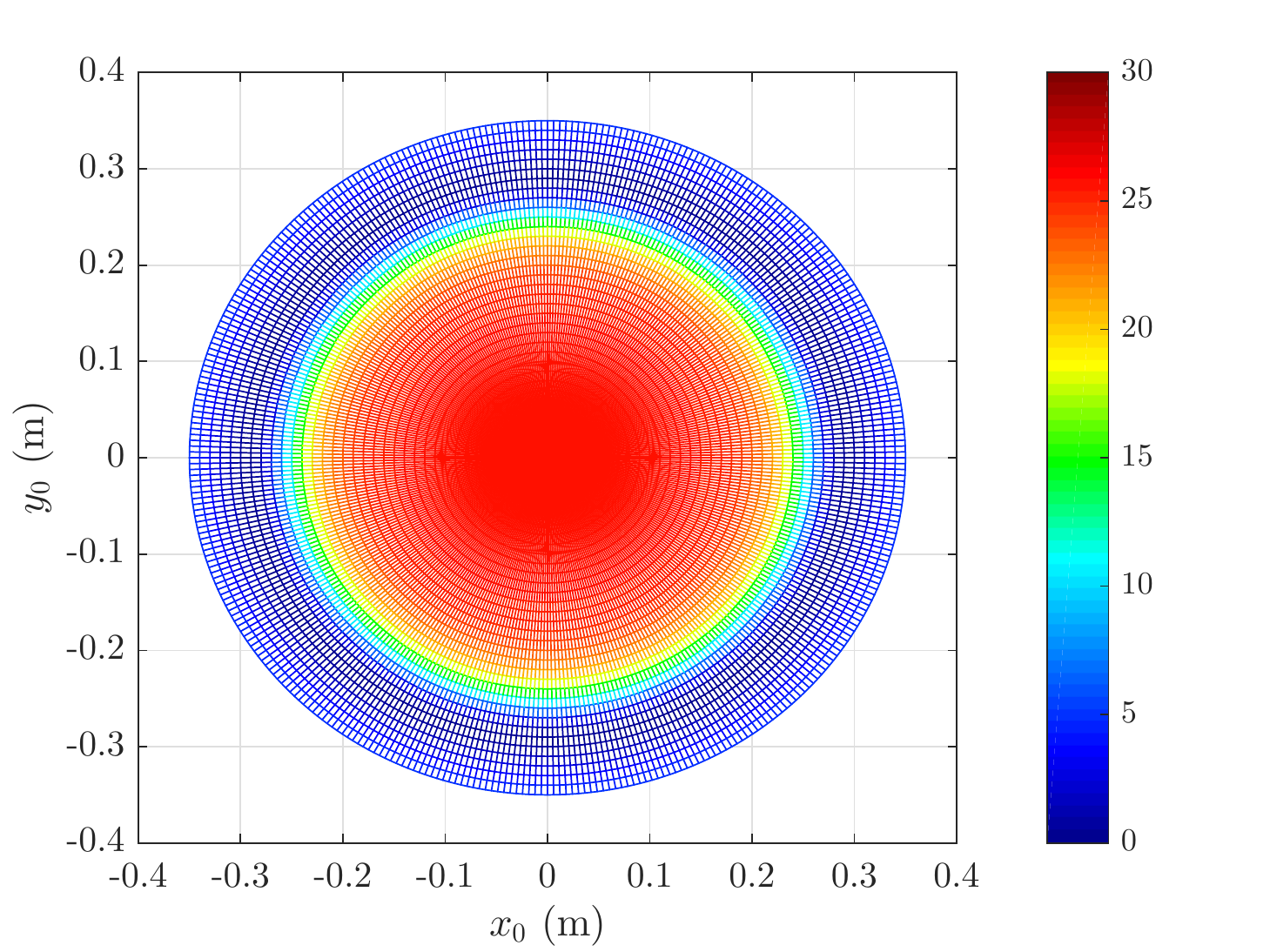}}}
	\end{center} 
	\caption{The load power distribution under different transmitter placement schemes in 2D.} \label{fig:2D-p0}
\end{figure*} 
%*************************
%*************************
%*************************
%*************************
\vspace{-1mm}
\subsection{Numerical Example} \label{Sec:2D-NumericalEcample}
To illustrate the performance of joint magnetic beamforming and transmitter location optimization in the 2D disk region case,  we consider the same system parameters as in Section \ref{Sec:numerical Example} for the 1D target line, which is now replaced by a disk  of radius $\rho=0.35$m (i.e., with $0.7$m in diameter which is the standard size for a round table with $2$--$3$ seats). Hence, the target region area ($0.385$m$^2$) is about ten times larger than the sum-area of all transmitter coils ($0.0393$m$^2$). 

As shown in Figs. \ref{fig:SymStructure}(a)--(c), three   rotationally symmetric structures exist for the system of $N=5$  transmitters. 
After obtaining the optimized transmitter locations for the three rotationally symmetric structures, we have  $\rho_1^\star=0.228$m  with  $\tau_1^\star=17.17$ for Structure $1$.   
For Structure $2$, we obtain  $\rho_1^\star=0.13$m  and $\tau_2^\star=2.85$.  For Structure $3$, we obtain $\rho_1^\star=\rho_2^\star=0.241$m,  $\phi_2^{\star}=\pi/2$,  and $\tau_3^\star=6.89$. 
Since $\tau_1^\star> \tau_3^\star> \tau_2^\star$, it follows that Structure $1$ has the best  performance in terms of maximizing the minimum deliverable power to the receiver load over the given target disk region $\cal R$, with the  system setup considered above.
For benchmark performance, we also consider centralize WPT (see Fig. \ref{fig:WPTTable}(a)) where  the five   transmitters are all placed at the origin $(x=0,y=0,z=0)$. Note that this benchmark structure is a special case of Structures $1$--$3$.  

Fig. \ref{fig:2D-p0} shows the power deliverable to the receiver load versus its location $(x_0,y_0)$ in $\cal R$, under distributed WPT including the three rotationally symmetric structures as well as centralized WPT (benchmark structure), with the optimized transmitter locations in each of the  three structures in distributed WPT and the optimal magnetic beamforming adaptive to the receiver location applied.  
The detailed performance  comparison among the four structures is  summarized in Table \ref{tab:quantitiveComparision3}, from which it is observed that the minimum deliverable power achieved by Structure $1$ is indeed much larger than those of  the other structures based on the actual simulation results.  
Note that $p_{\min}$ reported in Table \ref{tab:quantitiveComparision3} for  each of Structures $1$, $2$, and $3$ slightly differs  from  $\tau_s^{\star}$ obtained previously by solving its corresponding optimization problem. 
This is due to the fact that the approximation in (\ref{eq:hn0-simplified}) is used to compute $h_{n0}$'s in the node placement optimization problem, but the actual mutual inductance expression in (\ref{eq:h_n0}) is used in all simulations to achieve the best   accuracy.  
\begin{table}[t!]
	\centering
	\caption{Performance comparison between different transmitter placement schemes in 2D.}
	\begin{tabular}{|c|c|R|R|R|R|}
		\hline
		\multicolumn{2}{|c|}{Scheme} & $p_{0,\text{avg}}$ (W) & $p_{0,\min}$ (W) & $p_{0,\max}$ (W) & $\xi$ ~~~ (\%) \\ \hline
		\multirow{3}{*}{Distributed} & Structure $1$ & $24.02$ & $18.24$ & $25.54$ & $71.42$ \\ \cline{2-6} 
		& Structure $2$ & $22.31$ & $2.70$ & $25.62$ & $10.54$ \\ \cline{2-6} 
		& Structure $3$ & $24.64$ & $8.15$ & $25.54$ & $31.91$ \\ \hline
		\multicolumn{2}{|c|}{Centralized} & $17.93$ & $0$ & $25.65$ & $0$ \\ \hline
	\end{tabular}  \vspace{-3mm}
	\label{tab:quantitiveComparision3}
\end{table}
\begin{table}[t!]
	\centering
	\caption{Impact of region radius $\rho$ on the minimum deliverable power $p_{0,\min}$ under different transmitter placement schemes in 2D.} \vspace{-1mm}
	\begin{tabular}{|R|c|c|c|c|}
		\hline
		\multirow{3}{*}{\begin{tabular}[c]{@{}c@{}}$\rho$ (m)\end{tabular}} & \multicolumn{4}{c|}{$p_{0,\min}$ (W)} \\ \cline{2-5} 
		&  \multirow{2}{*}{Centralized} &\multicolumn{3}{c|}{Distributed} \\ \cline{3-5}
		&  &Structure $1$ & Structure $2$ & Structure $3$ \\ \hline
		$0.1$ & $25.59$ & $25.59$ & $25.59$ & $25.59$ \\ \hline
		$0.3$ & $0$ & $23.11$ & $4.33$ & $18.38$ \\ \hline
		$0.6$ & $0$ & $3.29$ & $2.69$ & $3.31$ \\ \hline
	\end{tabular}
	\label{tab:quantitiveComparision4} \vspace{-2mm}
\end{table} 

Next, Table \ref{tab:quantitiveComparision4} shows the impact of changing the target disk radius $\rho$ on the performance of  WPT in 2D.  
First, it is observed that when $\rho$ increases, the minimum deliverable power $p_{\min}$ decreases for all structures.   
It is also observed that when $\rho=0.1$m, the four structures perform the same, since the optimal solution is to place all the transmitters at the center $(x=0,y=0,z=0)$.  It is further observed that when $\rho=0.3$m, Structure $1$ outperforms the other structures, while Structure $3$ achieves slightly larger $p_{0,\min}$ over the other cases when $\rho=0.6$m. 
Last, it is observed that the minimum deliverable power of centralized WPT (benchmark structure) significantly drops when $\rho>0.1$m, which shows the inefficacy  of centralized WPT  in the 2D case. 
%*************************
%*************************
%*************************
\vspace{-2mm}
\section{Conclusion} \label{Sec:Conclusion}
In this paper, we study the node placement optimization for a MISO MRC-WPT system with distributed magnetic beamforming. 
First, we  propose the optimal magnetic beamforming solution to jointly assign the currents at different transmitters subject to their sum-power constraint with given locations of the  transmitters and  receiver. 
We show that although distributed WPT with optimal magnetic beamforming achieves better performance  than  centralized WPT, the resulting  load power profile still fluctuates  over a given target region considerably. 
This motivates us to  formulate a node placement problem to jointly optimize the transmitter locations with adaptive magnetic beamforming to maximize the minimum  power delivered to the load over a 1D line region. 
We propose an efficient algorithm for solving this problem based on bisection method and gradient-based search, which is shown by simulation to be able to improve the load power distribution significantly. 
Finally, we extend our design approach to the case of 2D disk region and show that  significant performance gain can also be achieved in this case.    
In this paper, for simplicity we assume identical transmitter coils of equal size, while the performance of WPT may be further improved if the sizes of transmitter coils can be optimized jointly with the transmitter locations, which is an interesting problem worthy of further investigation.     
Moreover, in this paper we assume  that the transmitters and receiver are placed in two parallel planes, and thus their coils all have the same orientation.  Reformulating and solving the node placement problem for the general scenarios where the coils of transmitters and receivers can have arbitrary locations/orientations is also interesting to investigate in future work.      
%*************************
%*************************
%*************************
\appendix 
\subsection{Proof of Lemma \ref{Lemma:mutualInductance}} 
In (\ref{eq:h_n0}), we can express  $J_1(e_{\text{coil},\text{tx}} u) J_1(e_{\text{coil},\text{rx}} u)=$ $ e_{\text{coil},\text{tx}}e_{\text{coil},\text{rx}}u^2/4$ $+ \sum_{m_1=1}^{\infty}\sum_{m_2=1}^{\infty}$  $(-1)^{m_1+m_2} \dot{J}_{m_1, m_2}(u)$, with $\dot{J}_{m_1, m_2}(u)$ $=(e_{\text{coil},\text{tx}}u/2)^{2m_1+1}(e_{\text{coil},\text{rx}}u/2)^{2m_2+1}/$ $(m_1!m_2!(m_1+1)! (m_2+1)!)$. 
Given $e_{\text{coil},\text{tx}}, e_{\text{coil},\text{rx}} \ll z_0$, we have  $\dot{J}_{m_1,m_2}(u) e^{-z_0 u}\approx 0$ over $u \ge 0$, since its maximum value over $u$ is  $ \beta_{m_1,m_2}  (e_{\text{coil},\text{tx}}/z_0)^{2m_1+1}(e_{\text{coil},\text{rx}}/z_0)^{2m_2+1}$, with $\beta_{m_1,m_2}=((m_1+m_2+1)/\exp(1))^{2(m_1+m_2+1)} /$ $(m_1!m_2!(m_1+1)! (m_2+1)!)$, which decreases to zero as  $(e_{\text{coil},\text{tx}}/z_0)^{2m_1+1}\rightarrow 0$ and $(e_{\text{coil},\text{tx}}/z_0)^{2m_1+1}\rightarrow 0$ for $m_1, m_2 \ge 1$. 
Hence, we can simplify (\ref{eq:h_n0}) as \vspace{-1mm}
\begin{align}
h_{n0}&\approx \beta\int_0^{\infty} J_0(d_{n0}u) u^2 e^{-z_0 u}du, \label{eq:hn0_int_simplified}
\end{align}
where $\beta=\mu \pi b_\text{tx} b_\text{rx} e_{\text{coil},\text{tx}}^2  e_{\text{coil},\text{rx}}^2/4$ is defied for convenience. 

Next, let ${\cal J}_{0,\gamma}(\psi)={\cal L} \{J_0(\gamma u)\}$, where $\gamma$ denotes a real number and  ${\cal L}\{\cdot\}$ represents the Laplace transformer. Specifically, we have \vspace{-1mm}
\begin{align}
{\cal J}_{0,\gamma}(\psi)=\int_{0}^{\infty} J_0(\gamma u) e^{-\psi u} du=\dfrac{1}{\sqrt{\gamma^2+\psi^2}}. \label{eq:LaplaceJ0}
\end{align}
It is known that for any real function $o(u)$,  with $O(\psi)$ denoting its Laplace transform, we have ${\cal L} \{u^n o(u)\}= (-1)^n \partial^n O(\psi)/\partial \psi^n$, $n=1,2,$ and so on. 

From  (\ref{eq:hn0_int_simplified}) and (\ref{eq:LaplaceJ0}), it then follows that $h_{n0}\approx \beta \partial^2 {\cal J}_{0,\gamma}(\psi)/ \partial \psi^2 = \beta (2\psi^2-\gamma^2)/(\gamma^2+\psi^2)^{5/2}$, with $\psi=z_0$ and $\gamma=d_{n0}$. The proof is  thus completed.
%*************************************
\vspace{-3mm}
\subsection{Proof of Proposition \ref{Proposition:Optimal_Current}}
For (P1), the optimal current solution   $i_n$'s to (P1) can be obtained by leveraging  the Karush-Kuhn-Tucker (KKT) conditions of the optimization problem  \cite{ConvOpt-Book}. Let $\lambda \ge 0$ denote the dual variable corresponding to the constraint (\ref{eq:p1_c1}).  The
Lagrangian of (P1) is given by
\begin{align}
&L= \dfrac{w^2 }{r_\text{rx}} \bigg(\dfrac{r_\text{rx,l}}{r_\text{rx}}-\lambda \bigg)\bigg(\sum_{n=1}^{N} h_{n0}\overline{i}_n\bigg)^2 \hspace{-2mm}-\lambda\bigg(r_\text{tx}\sum_{n=1}^{N} \overline{i}_n^2- p_{\text{max}} \bigg). \label{eq:lagrangina}
\end{align}
The KKT conditions of (P1) are also given by
\begin{align}
& r_\text{tx}\sum_{n=1}^{N} {\overline{i}_n}^2+\dfrac{w^2}{r_\text{rx}}  \bigg(\sum_{n=1}^{N} h_{n0}\overline{i}_n\bigg)^2\le  p_{\text{max}}, \label{eq:primalFes} \\
& \lambda\ge 0, \label{eq:dualFes}\\
&\dfrac{w^2 h_{n0}}{r_\text{rx}} \bigg(\dfrac{r_\text{rx,l}}{r_\text{rx}}-\lambda\bigg)\bigg(\sum_{k=1}^{N} h_{k0}\overline{i}_k\bigg) -\lambda  r_\text{tx}\overline{i}_n=0, ~ \forall n, \label{eq:KKT-Zero-Deriv} \\
&\lambda \bigg(r_\text{tx}\sum_{n=1}^{N} {\overline{i}_n}^2+\dfrac{w^2}{r_\text{rx}}  \bigg(\sum_{n=1}^{N} h_{n0}\overline{i}_n\bigg)^2 - p_{\text{max}} \bigg)=0. \label{eq:KKT-compl-slackn}
\end{align}
where (\ref{eq:primalFes}) and (\ref{eq:dualFes}) are the feasibility conditions for the primal and dual solutions,  respectively,  (\ref{eq:KKT-Zero-Deriv}) is due to the fact that the gradient of the Lagrangian with respect
to the optimal primal solution  $\overline{i}_n$'s must vanish, and (\ref{eq:KKT-compl-slackn}) stands for the complimentary
slackness. To solve the set of equations in  (\ref{eq:primalFes})--(\ref{eq:KKT-compl-slackn}), we consider two possible cases as follows.

$\bullet$ Case 1: $\lambda=0$. It can be  verified that any set of  $\overline{i}_n$'s satisfying $\sum_{n=1}^{N} h_{n0} \overline{i}_n=0$ and $r_{\text{tx}}\sum_{n=1}^{N} \overline{i}_n^2 \le p_{\text{max}}$ can  satisfy  the KKT conditions (\ref{eq:primalFes})--(\ref{eq:KKT-compl-slackn}) in this case. 
However,  the resulting $\overline{i}_n$'s will make the objective function of (P1) in (\ref{eq:p1_obj}) equal to zero, which cannot be the optimal value of (P1); therefore, this case cannot lead to the optimal solution to (P1).

$\bullet$ Case 2: $\lambda>0$. From (\ref{eq:KKT-Zero-Deriv}), it follows that $\overline{i}_k=(h_{k0}/h_{n0})\overline{i}_n$, $\forall k\neq n$. Moreover, from  (\ref{eq:KKT-compl-slackn}), it follows that $r_\text{tx}\sum_{n=1}^{N} {\overline{i}_n}^2+  (w\sum_{n=1}^{N} h_{n0}\overline{i}_n)^2/r_\text{rx}=p_{\text{max}}$. Accordingly, we obtain  $\overline{i}_n=\kappa  h_{n0}$, $n=1,\ldots,N$, and $\lambda=(w^2 r_{\text{rx,l}} \sum_{n=1}^{N} h_{n0}^2)/(r_{\text{tx}}r_{\text{rx}}^2+w^2 r_{\text{rx}}  \sum_{n=1}^{N} h_{n0}^2)$, where $\kappa$ is given by  \vspace{-3mm}
\begin{align}
\kappa=\dfrac{\sqrt{p_{\max}}}{\sqrt{  \bigg(\sum_{n=1}^{N} h_{n0}^2 \bigg) \bigg(r_\text{tx}+\dfrac{w^2}{r_\text{rx}} \sum_{n=1}^{N} h_{n0}^2 \bigg) }}. 
\end{align}
The obtained $\overline{i}_n$'s and $\lambda$  satisfy the KKT conditions  (\ref{eq:primalFes})--(\ref{eq:KKT-compl-slackn}). 

Note that except the above set of primal and dual solutions to (P1), $\overline{i}_n$'s and $\lambda$, given in Case  $2$, there is no other solution that  satisfies the KKT conditions  (\ref{eq:primalFes})--(\ref{eq:KKT-compl-slackn}).  
Thus, we can conclude that the solution obtained in Case $2$ is indeed the optimal solution to (P1) because the KKT conditions are necessary (albeit not necessarily sufficient) for the optimality of a non-convex optimization problem, which is the case of (P1).    
The proof is thus completed. 
%*************************************
\vspace{-3mm}
\subsection{Proof of Lemma \ref{Lemm:SymmCond}}
By definition,  \textit{rotational symmetry} refers to the property of an object if it looks the same after a certain turn around its center.
Based on this definition, a structure with $Q$ transmitter rings, shown in Fig. \ref{fig:Symmetry-Proof}, is rotationally symmetric {\it iff} there exists a common rotation angle $\dot{\phi}$, with $0< \dot{\phi} < 2\pi$, such that by setting $\phi_q+\dot{\phi} \rightarrow \phi_q$, $q=1,\ldots, Q$, the locations of the transmitters over $\cal R$ are invariant.  
Note that as mentioned in Section \ref{Sec:2D_ProbForm}, we have assumed that the transmitters are equally separated  over each ring to satisfy the rotational symmetry.   
In the following, we show  a necessary and sufficient condition for such $\dot{\phi}$ to exist.  

Consider any ring $q$. Since $N_q$ transmitters are equally spaced over ring $q$,  it can be verified that by setting   $\dot{\phi}=2\pi k_q/N_q$ in rad, with $k_q=1,\ldots,N_q-1$, the ring looks the same after the rotation.  
Without loss of generality, we  set $k_q=1$. 
By considering all $Q$ transmitter rings, we thus have  $\dot{\phi}=2\pi k/u$, with $k=1,\ldots, \min_{q\in\{1,\ldots,Q\}} N_q/u$, where $u$ is a common divider such that $N_q \mod u=0$,  $q=1,\ldots, Q$. 
In this case, the condition $0< \dot{\phi} < 2 \pi$ (which is necessary and sufficient for rotational symmetry) holds {\it iff}  there exists  at least one common divider $u$ no smaller than $2$, i.e., $u \ge 2$. 
The proof is thus completed.   
%**************************************
%**************************************
\vspace{-3mm}
\subsection{Proof of Proposition \ref{proposition:QN}}
First, we show that a structure with $Q$ transmitter  rings is rotationally symmetric and distinct over $\cal R$ {\it iff}  $N_1=\ldots=N_Q=u$ and $u \in \mathbb{P}_N$. Then, we obtain $S_N$ for  $N\ge 2$.  
 
Firstly, we show that if $N_1=\ldots=N_Q=u$ and $u \in \mathbb{P}_N$, the considered structure with $Q$ transmitter  rings is rotationally symmetric  and distinct over $\cal R$.
With given $u \in \mathbb{P}_N$, we have $u\ge 2$.  
Hence, from Lemma \ref{Lemm:SymmCond}, it follows that the structure is rotationally symmetric over $\cal R$, since $N_q \mod u=0$, $q=1,\ldots,Q$, and $u\ge 2$.    
Moreover,  since $u$ is a prime number and we have $N_1=\ldots=N_Q=u$, it can be easily verified that it is impossible to divide each individual ring into two or more concentric  rings each with smaller number of transmitters than the original ring while still preserving rotational symmetry of the structure.   
This implies  that the structure is indeed distinct. The proof of the `if' part is thus completed.    

Secondly, we show that if the considered structure with $Q$ transmitter  rings is rotationally symmetric and  distinct over $\cal R$, then $N_1=\ldots=N_Q=u$ and $u \in \mathbb{P}_N$ must hold.       
In this case, since the structure is assumed to be rotationally symmetric over $\cal R$, from Lemma \ref{Lemm:SymmCond},  it follows that a common divider $u \ge 2$ exists such that $N_q \mod u=0$, $q=1,\ldots, Q$.     
Hence, we can represent $N_q=a_q u$, where $a_q=N_q/u$ is a positive integer.  
Moreover, from the distinction  of the structure,  it follows that $a_q=1$, $q=1,\ldots,Q$; otherwise, each ring with $a_q>1$ can be divided into $a_q$ rings each with $u$ transmitters (while still  maintaining rotational symmetry), which contradicts with the initial assumption of distinct  structure and thus cannot be true.    
As a result,  we have $N_1=\ldots=N_Q=u$.   
Next, by contradiction we prove  that $u \in \mathbb{P}_N$  also  holds.  
Suppose  $u \notin \mathbb{P}_N$. 
Accordingly, we can rewrite  $u=a \dot{u}$, where $\dot{u} \in \mathbb{P}_N$ and $a=u/\dot{u}$ is a positive integer.    
In this case, each ring $q$ can be divided into $a$ rings each with $\dot{u}$ transmitters equally spaced over it, where  these rings can have different radii and  rotation angles in general, while in our case their radii are all set the same.    
This means that the considered structure is a special case of a more general structure with more rings each consisting of a prime number ($\dot{u}$) of transmitters, and thus cannot be distinct. 
Thus,  $u \in \mathbb{P}_N$  must hold. The proof of the `only if' part is thus completed. 
 
To sum up, we have shown that a  structure with $Q$  transmitter rings is rotationally symmetric and distinct {\it iff}  $N_1=\ldots=N_Q=u$ and $u \in \mathbb{P}_N$. With this result, it immediately follows that  $S_N=|\mathbb{P}_N|$. The proof of this proposition is thus completed.  
%*************************
%*************************
%*************************
%*************************
\vspace{-1mm}

%*********************************
%*********************************
%*********************************

\end{document}